\documentclass[11pt]{article}
\usepackage{amsmath,amssymb,amsfonts,amsthm}
\numberwithin{equation}{section}
\usepackage[margin=1.15in]{geometry}
\usepackage[numbers]{natbib}
\usepackage[many]{tcolorbox}
\usepackage{enumitem}
\usepackage{subcaption}
\usepackage{xcolor}
\usepackage{booktabs}
\usepackage{array}
\usepackage{placeins}
\definecolor{darkgreen}{RGB}{0, 100, 0}
\usepackage[colorlinks=true,citecolor=darkgreen,linkcolor=blue]{hyperref}

\makeatletter
\newcommand{\nameditem}[1]{\item[(#1)]\def\@currentlabel{#1}\label}

\@tfor\cal@letter:=ABCDEFGHIJKLMNOPQRSTUVWXYZ\do{%
    \expandafter\edef\csname cal\cal@letter\endcsname{\noexpand\mathcal{\cal@letter}}%
}
\makeatother

\newcommand{\ind}[1]{\mathbf{1}\left\{#1\right\}}
\newcommand{\E}{\mathbb{E}}
\newcommand{\risk}{\mathbf{R}}
\renewcommand{\Re}{\mathbb{R}}
\newcommand{\w}{\mathrm{w}}
\renewcommand{\P}{\mathsf{P}} 
\newcommand{\Q}{\mathsf{Q}} 

\newcommand{\Mbh}{\calM^{\textnormal{\textsc{e-bh}}}}
\newcommand{\Mall}{\calM^{\textnormal{\textsc{AoN}}}}
\newcommand{\Mcum}{\calM^{\textnormal{\textsc{cum}}}}
\newcommand{\Monline}{\calM^{\textnormal{\textsc{on}}}}

\newcommand{\nap}{\textnormal{\textsf{NAP}}}
\newcommand{\nar}{\textnormal{\textsf{NAR}}}

\newtheorem{proposition}{Proposition}
\newtheorem{theorem}{Theorem}

\newtheorem{remark}{Remark}
\theoremstyle{definition}

\newtheorem{example}{Example}

\title{Strategic Multi-Agent Resource Allocation \\ via Statistical Contracts}
\author{Ben Chugg\thanks{\footnotesize Carnegie Mellon University, \url{benchugg@cmu.edu}} \quad Michael I. Jordan\thanks{\footnotesize UC Berkeley, ÉNS, \& Inria} \quad Aaditya Ramdas\thanks{\footnotesize Stanford University}}
\date{\today}

\newif\iflong
\longtrue

\begin{document}

\maketitle

\begin{abstract}
We study a class of strategic hypothesis testing problems in which a principal aims to allocate resources among competing agents who possess data or data-generating mechanisms that may be relevant to the principal's goals. Each agent can pay a participation cost to opt into a \emph{statistical contract}, in which case they generate data from a private distribution and receive an allocation based on a predetermined (and publicly known) mechanism. Each agent is either ``null'' or ``non-null'' and the principal's goal is to design the contract to limit the allocation to the former while maximizing transactions with the latter.

Using e-values, we define a ``continuous-closure,'' a set of allocations which prevents null agents from profiting in expectation and provides post-hoc risk control for general loss functions under arbitrary dependence.
When the loss is the fraction of resources allocated to null agents, the region admits polynomial-time linear optimization. We also give explicit, interpretable closed-form approximations, and we analyze participation from a non-null agent's perspective, giving conditions under which entry into the contract is profitable. 
Finally, we extend the framework to support online arrival of the agents, in which case the principal has to make irrevocable on-the-fly decisions, and provide a method which gives time-uniform error-control in this setting.  

\end{abstract}

\section{Introduction}

We consider allocation problems in which a principal must distribute resources among competing agents.
Each agent is associated with a distribution $P\in\calR$, where $\calR$ is partitioned into ``nulls'' $\calP$ and ``non-nulls'' $\calQ$. Each agent's distribution is unknown to the principal (but may be known to the agent), whose goal is to distribute resources to the non-null agents while avoiding null agents. To motivate such problems, consider the following scenarios.

\begin{example}
\label{ex:pharma}
Large pharmaceutical companies (principals) often rely on independent startups (agents) for drug discovery and early-stage clinical trials~\citep{vandevyver2025pulse}. Startups approach the company if they have a promising result. They conduct a pilot study to demonstrate the promise of their drug,  and the company allocates revenue (e.g., for further experimentation) based on the result.
\end{example}

\begin{example}
\label{ex:digital-platform}
A digital platform (the principal) allocates user traffic among competing ads, app-store listings, or product variants. Advertisers, developers, or internal product teams (agents) propose candidate interventions and evaluate them through the platform's experimentation infrastructure. The platform then uses the resulting evidence to decide on traffic share. Real-world examples include app-store optimization~\citep{apple2026productpage,google2026customexperiment} and advertising experiments~\citep{meta2026splittesting,kohavi2013online}.
\end{example}

Our approach falls under the purview of \emph{statistical contract theory}~\citep{bates2022principal}. As in classical contract theory~\citep{laffont2002theory}, the goal of statistical contract theory is to solve problems involving \emph{information asymmetry}---where a principal interacts with an agent who has problem-relevant knowledge that the principal does not possess, where the principal accordingly needs to incentivize the agent to participate in the solution of the problem, and where the overall social welfare reflects both the principal's goals and the agent's goals. Statistical contracts focus in particular on problems of information asymmetry involving data analysis, where the principal and the agent have different levels of access to statistical quantities such as parameters, distributions, and data.

When designed well, a statistical contract should allow the principal to solve an inferential problem---such as controlling false positives in a hypothesis test---via a protocol whereby an agent can verify that it is in their interest to contribute their private knowledge in the interaction with the principal.  Previous work in statistical contract theory has focused on the core problem of hypothesis testing in the interaction between a principal and a single agent~\citep{bates2022principal,bates2023incentive,chen2025strategic}. The current paper expands the scope of the theory to consider principal-agent hypothesis testing in the more general setting of multiple agents.

Let us briefly sketch the details of a contract here, which is defined by an allocation mechanism $\calM$. Each agent $i$ with private distribution $P_i$ can pay a fixed, publicly known cost $c_i>0$ to enter the contract. Upon entry, data $X_i\sim P_i$ is drawn from the agent's distribution.\footnote{Perhaps the data is drawn by the agent themselves, by the agent supervised by the principal, or by a trusted third-party. We remain agnostic to these details.}
The principal gathers data from all agents who opted into the contract and feeds them into  $\calM$ which provides a  final allocation for each (e.g., a share of the total budget, a profit license, or how widely to disseminate an agent's product). See Figure~\ref{fig:contract}. We assume that the mechanism to decide on the allocation given the data is publicly known. The goal of this paper is to design mechanisms that yield controllable incentive structures.

The main challenge is in designing a contract which ensures that the allocation to null agents is ``small'' (as determined by some loss function), even when the principal does not know the underlying distribution of each agent or which agents are null. The interaction between the principal and the agents is thus \emph{strategic}: Agents may have private information concerning their own distribution $P_i$, and can decide to enter the contract based on $c_i$, $P_i$, and perhaps their beliefs about the other agents.

\begin{figure}
    \centering
    \includegraphics[width=0.75\textwidth]{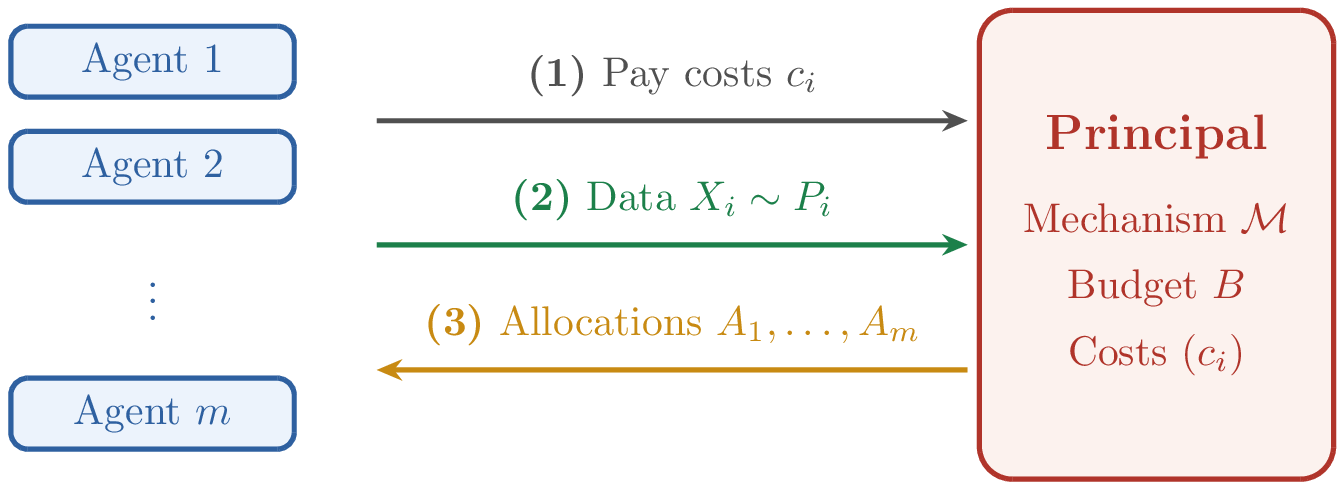}
    \caption{Strategic multi-agent resource allocation. The principal publicly announces the mechanism $\calM$, an optional budget $B$, and the costs $(c_i)$. Each agent can enter the contract, at which point they pay $c_i$ and generate data from their private distribution. The mechanism $\calM$ then computes an allocation for each agent.}
    \label{fig:contract}
\end{figure}

\paragraph{Contributions.}
After formalizing the multi-agent statistical contract in Section~\ref{sec:contract}, we introduce several desiderata that we would like a ``well-designed'' contract to satisfy. Following \citet{bates2022principal}, we say that a contract satisfies \emph{null deterrence} if the expected allocation to each null agent $i$ is at most their participation cost $c_i$. We then introduce the \emph{null allocation rate}---the expected fraction of the payout to null agents---and various other loss functions that the principal might like to control. We also introduce \emph{post-hoc risk control}, which is control of a loss function at data-dependent significance levels (thus strictly generalizing the notion of controlling the expected loss at a predetermined level) and simultaneously over several loss functions. From here, our main contributions are the following:
\begin{enumerate}
    \item In Section~\ref{sec:general-losses}, we introduce the \emph{continuous-closure} for arbitrary loss functions, a set of allocations which satisfy null deterrence and post-hoc risk control. The principal is able to freely maximize any objective function over this set.
    We then focus on the null allocation rate specifically, showing that one can optimize linear objectives over its continuous-closure in polynomial time. We then give three simple mechanisms which provide explicit, interpretable approximations to this optimum. We characterize the tradeoffs both analytically and experimentally.
    \item In Section~\ref{sec:nonnull-participation}, we turn to studying non-null agents, aiming to provide conditions under which it is beneficial to opt into the contract.
    We show that strong evidence alone need not make entry profitable, since competitors with faster evidence growth can absorb nearly all of the budget. However, a simple strategy of truncating the evidence yields explicit lower bounds for a non-null agent's payout. Moreover, we show that such truncation is necessary---in a particular formal sense---to provide such guarantees.
    \item Finally, Section~\ref{sec:extensions} considers sequential allocation. When agents arrive over time and allocations are irrevocable, we establish time-uniform control of the null allocation rate, while permitting the principal to allocate risk incrementally through a predictable policy. We also show how the allocation can fund another round of experimentation. Appendix~\ref{app:selective-entry} separately treats agents who observe their data before deciding whether to enter the contract.

\end{enumerate}
The allocations of all our mechanisms are based on \emph{e-values}, a statistical cousin of the p-value~\citep{ramdas2025hypothesis,grunwald2024safe}. Just as in the single-agent setting of \citet{bates2022principal}, e-values are necessary tools for designing contracts that satisfy null deterrence.
They are also very natural here: E-values are often considered as measures of evidence against the null hypothesis~\citep{chugg2026values}, designed to be small under $\calP$ and large under $\calQ$. Thus, allocations based on e-values will naturally favor agents in $\calQ$. One consequence of e-value-based allocations is that our batch mechanisms are robust under arbitrary dependence across agents, and can easily be extended to support online allocation.

\paragraph{Setting and preliminaries.}
As stated above, we consider a universe of distributions $\calR$ partitioned into null distributions $\calP$ and non-null distributions $\calQ$. Each agent, which we identify with a number $i\in\mathbb{N}$, is associated with a distribution $P_i\in \calR$ unknown to the principal (and perhaps not even fully known to the agent). Null agents have $P_i\in\calP$. An e-value for $\calP$ is a nonnegative random variable $E$ such that $\E_P[E]\leq 1$ for all $P\in\calP$.

Throughout, we enumerate the set of agents who opted into the contract as $1,\dots,m$, and we let $N\subseteq[m]$ be the set of null agents. Agent $i$ has a fixed, publicly known participation cost $c_i>0$. For each $S\subseteq[m]$, define $C_S=\sum_{i\in S}c_i$, and write $C=C_{[m]}$ for the total participation cost. We will always take the set $N$ of null agents to be unknown but fixed. Thus, $m$, $(c_i)_{i\in[m]}$, and $N$ are treated as constants in our analyses and results.
We write $\E_N[\cdot]$ as a shorthand for the worst-case expectation over joint distributions such that $N$ is the set of null agents.\footnote{More formally, let $\mathfrak P_N$ denote the class of joint laws
under which precisely the agents in $N$ are null, that is,
the marginal law $P_i$ of the data generated by agent $i$ belongs to $\calP$ for each $i\in N$ and to $\calQ$ for each $i\in[m]\setminus N$. Unless stated otherwise, \(\mathfrak P_N\) imposes no independence
assumptions among the agents' data. Then $\E_N[\cdot] = \sup_{R\in\mathfrak P_N} \E_R[\cdot]$ is the supremum of expectations over all such joint laws.}


\section{Related Work}
\label{sec:related-work}

This work sits at the intersection of multiple testing, contract theory, and resource allocation. It brings together questions of statistical validity, strategic participation, and optimization under resource constraints. Our view is that these connections create opportunities for the cross-fertilization of ideas in both directions.
We review the most relevant lines of work below.

\paragraph{Statistical contract theory.} The field of statistical contract theory was kickstarted by the original paper of \citet{bates2022principal} on principal-agent hypothesis testing, sometimes also called strategic hypothesis testing~\citep{chen2025strategic}. This work builds on the earlier analysis of \citet{tetenov2016economic}, who was concerned with  drug approval after phase III clinical trials and studied how to set the significance level to deter null agents.
There have been several follow-up works in this area, each providing a different  perspective~\citep{artola2026optimizing,bates2023incentive}.
A recent line of work studies the optimality properties of statistical contracts, including instance-specific and minimax characterizations of optimal tests~\citep{shi2024sharp,shi2025instance,chen2025strategic}.

\paragraph{Algorithmic contract theory.} There is a distinct literature on \emph{algorithmic} contract theory which studies how to design optimal contracts under a model of moral hazard~\citep{dutting2024algorithmic,zhu2022sample,saig2023delegated,saig2024incentivizing}.
There, an agent takes a  hidden, costly action that shapes the distribution of outcomes and the principal designs payments to maximize its expected utility. In our setting, by contrast, the agent's only action (whether to participate) is observed and does not change its data distribution. Still, algorithmic contract theory has been applied to multi-agent systems, so connections between the two subfields may be worth further investigation~\citep{dutting2023multi}.

\paragraph{Contract theory and mechanism design.}
Both statistical contract theory and algorithmic contract theory are computational perspectives on the more general area of contract theory, a rich area of economics~\citep{smith2004contract,bolton2004contract}. Standard questions in contract theory include how to design incentives under asymmetric information, how to induce agents to take costly actions, how to screen agents with heterogeneous private types, and how to structure rewards
when the principal observes only noisy signals of agent quality. Our setting shares these concerns, but differs in that the data are generated by formal statistical experiments and the principal seeks risk guarantees on the resulting allocation.

Likewise, principal-agent frameworks have also long been studied outside of statistics~\citep{grossman1992analysis}. Classical principal-agent models study incentives under asymmetric information, moral hazard, or hidden action and thus differ from the hypothesis testing perspective that we adopt here.

\paragraph{Multiple testing and e-values.}
Our methods have much in common with the multiple-testing literature. This is perhaps unsurprising, given that multiple testing can be seen as a form of binary resource allocation, where each hypothesis is an agent and we are making yes/no decisions.
Indeed, the continuous-closure introduced in Section~\ref{sec:general-losses} can be seen as a generalization of the e-closure principle~\citep{xu2025bringing}, and our e-BH mechanism in Section~\ref{sec:nar-control} draws inspiration from the e-BH procedure of \citet{wang2022false}. Our online mechanism is based on the e-LOND procedure of \citet{xu2024online}.
\citet{viviano2026model} recently studied multiple testing via an economic lens, asking how the error budget should be set as a function of research costs.

\paragraph{Resource allocation.} Many other frameworks are broadly focused on the question of resource allocation, including multi-armed and contextual bandits, portfolio optimization, auction design, and network optimization. Bandit algorithms allocate samples or traffic across alternatives while balancing exploration and exploitation~\citep{lattimore2020bandit,slivkins2019introduction}, though they do not typically take into account the strategic behavior on the part of the alternatives. Variants such as bandits with knapsacks~\citep{badanidiyuru2018bandits} take into account budget constraints, as we do here. Auction theory~\citep{menezes2005introduction} also studies allocation under uncertainty and strategic behavior, but does not impose statistical guarantees on the behavior of null agents.


\section{Multi-Agent Statistical Contracts}
\label{sec:contract}

Here we introduce the statistical contract which governs the interaction between the principal and the agents. We consider two scenarios: That of a finite budget, in which the total allocation across all agents must be bounded by a known constant $0<B<\infty$, and that of an infinite budget $B=\infty$. We emphasize that we use the term ``infinite budget'' for convenience only---it should not be taken literally as an infinite financial budget. Instead, it corresponds to situations in which there is no natural bound on the allocations produced by the principal. This arises, for example, when the principal is granting licenses to profit~\citep{bates2022principal,bates2023incentive}, market access, or exposure multipliers (e.g., in the case of ads in Example~\ref{ex:digital-platform}). In such settings, allocating to one agent does not reduce what can be allocated to another and yet, the principal would still like to optimize its allocation portfolio as a whole (not allocate very much to nulls).

\begin{tcolorbox}[
    colback=blue!8!white,
    colframe=white,
    boxrule=0pt,
    sharp corners,
    left=6pt,
    right=6pt,
    top=6pt,
    bottom=6pt,
    breakable
]
\textbf{Multi-agent statistical contract}
\begin{enumerate}[]
    \item The principal announces mechanism $\calM$, publicly known participation costs $(c_i)_{i\geq 1}$ with $c_i>0$, and budget $B\in(0,\infty]$.
    \item Each agent $i$ with private distribution $P_i$ decides whether to enter the contract. If they enter, they pay cost $c_i$ and \emph{then} collect their data $X_i\sim P_i$.
    \item The data $(X_1,\dots,X_m)$ from the $m$ agents that opted into the contract are fed into the mechanism $\calM$ which produces allocations $A_1,\dots,A_m\in\Re_{\geq 0}$. The allocations must satisfy $\sum_{i=1}^m A_i \leq B$.
\end{enumerate}
\end{tcolorbox}
Next we consider various desiderata that one might like the mechanism to satisfy.

\paragraph{Null deterrence.}
A first natural goal for the principal, following in the footsteps of \citet{bates2022principal}, is to construct the mechanism $\calM$ to de-incentivize null agents from participating. That is, the expected allocation for a null agent should not exceed the cost of entry. Formally, let us say that the mechanism $\calM$ satisfies \emph{null deterrence} (\ref{eq:ic}) if, for each null agent $i\in N$,
\begin{equation*}
\tag{ND}
\label{eq:ic}
    \E_{N}[A_i]\leq c_i.
\end{equation*}
Since allocations are always nonnegative, an easy consequence is that $\calM$ satisfies \ref{eq:ic} if and only if $A_i/c_i$ is an e-value for joint distributions such that $i\in N$.


\paragraph{Risk control.}
In addition to designing mechanisms satisfying \ref{eq:ic}, which is a constraint on each individual agent, the principal may be interested in controlling more general loss functions which take into account the portfolio of allocations.
Formally, a loss $\ell$ is an indexed family $\ell=(\ell_S)_{S\subseteq[m]}$ of nonnegative functions $\ell_S:\Re_{\geq0}^m\to\Re_{\geq0}$, where $\ell_S(A)$ is the loss at allocation $A=(A_1,\dots,A_m)$ when $S$ is the null set. Thus $\ell_N$ is the member of this family indexed by the true null set $N$. For simplicity, when the null set is clear from context, we may omit the subscript $N$ and
we will sometimes write $\ell_N(\calM)$
for the loss evaluated at the allocation produced by mechanism $\calM$.

Given a fixed loss function (we provide concrete examples below), we define two objectives. The first is \emph{risk control}, or expected loss control, at a prespecified level $\Delta>0$:
\begin{equation}
\label{eq:expected-loss}
    L_N(\calM_\Delta) := \E_N[\ell_N(\calM_\Delta)] \leq \Delta.
\end{equation}
We call $L_N(\calM_\Delta)$ the \emph{risk}. We have introduced the subscript on $\calM$ to indicate that the mechanism will take $\Delta$ as a parameter.

The second objective is more stringent. It draws inspiration from recent work on post-hoc inference~\citep{katsevich2020simultaneous,wang2022false,grunwald2024beyond,koning2023post,chugg2026admissibility}, the goal of which is to allow the error parameter to be data-dependent. Practically speaking, we would like to enable the principal to gather the data from the agents, compute the allocations with a prespecified error target $\Delta_0$ in mind, but then change its mind based on the result and compute the allocations with a different $\Delta_1\neq \Delta_0$. Such behavior is implicitly prohibited by~\eqref{eq:expected-loss}, which treats $\Delta$ as a data-independent constant.

This motivates the notion of \emph{post-hoc risk control}. Throughout, we assume that all allocation selections and all suprema appearing in risk bounds and certificates are measurable. Given a family of losses $\calL$ let us say that the family of mechanisms $\calM$ is \emph{post-hoc risk safe} with respect to $\calL$ if
\begin{equation}
\label{eq:expected-risk}
    \risk_N(\calM,\calL) := \E_N\left[\sup_{\ell \in\calL}\sup_{\Delta>0}\frac{\ell_N(\calM_\Delta)}{\Delta}\right]\leq 1.
\end{equation}
We call the quantity $\risk_N(\calM,\calL)$ the \emph{post-hoc risk}. Post-hoc risk control allows both the error bound $\Delta$ and the loss $\ell$ to be data-dependent, thus enabling the principal to change both and recompute the allocations as many times as desired.
It is perhaps worth emphasizing that in~\eqref{eq:expected-risk}, the set $N$ of null agents is fixed. The loss function itself may vary (and $\Delta$ alongside it), but $N$ is not a function of the data as each agent is either null or non-null before entering the contract. In other words, choosing a family $\ell\in\calL$ determines how $\ell_S$ behaves for all $S\subset[m]$, and $\ell_N$ denotes the member corresponding to the fixed null set $N$.

Observe that~\eqref{eq:expected-risk} is a stricter condition than~\eqref{eq:expected-loss}. Indeed, the former recovers the latter if we fix $\Delta$ and take $\calL=\{\ell\}$ to be a singleton. Many of the mechanisms we provide will be post-hoc risk safe; we will not continually belabor the point that this also means that they control $L_N(\calM_\Delta)$. Table~\ref{tab:loss-functions} in Appendix~\ref{app:loss-examples} lists several loss functions that would be sensible to consider. We turn to a particularly relevant loss next.

\paragraph{Null allocation rate.}
Section~\ref{sec:general-losses} presents allocation regions whose selections yield post-hoc risk-safe families of mechanisms for arbitrary losses. Thereafter, however, we will primarily be concerned with a specific loss---the \emph{null allocation proportion} (NAP). The motivation is that the principal might be concerned with the share of the total allocation going to null agents. We thus define the NAP, and its expectation, the \emph{null allocation rate} (NAR):
\begin{equation}
\begin{aligned}
    \nap_N(\calM)
    &:= \frac{\sum_{i\in N} A_i}{\sum_{i\in[m]} A_i},
    \qquad \nar_N(\calM)
    := \E_N[\nap_N(\calM)].
\end{aligned}
\end{equation}
We use the convention $0/0=0$ so that $\nap$ and $\nar$ are always well-defined and in $[0,1]$. We will often drop the subscript $N$ when it is clear from context.
If $\nar_N(\calM)\leq \Delta$ for any $\Delta>0$ we say that $\calM$ controls the NAR at level $\Delta$, or more generally that $\calM$ exhibits NAR control.

The reader familiar with multiple testing will recognize parallels to false discovery rate (FDR) control. In particular, if we enforce $A_i \in \{0,1\}$, then NAR control is equivalent to FDR control, albeit with the strategic entry of hypotheses (i.e., agents). Indeed, let $R = \{ i: A_i=1\}$,  in which case $\nap_N(\calM) = |N\cap R| / |R| $ is the false discovery proportion. This should give some hint that multiple testing is a particular instance of our framework.

Note that neither null deterrence nor NAR control implies the other. If one allocates, say, $c_i+1$ to each agent $i$, then the mechanism does not satisfy \ref{eq:ic}, but the NAR can be arbitrarily small when $(C_N+|N|)/(C+m)$ is small. Conversely, allocating $c_i$ to each agent satisfies \ref{eq:ic} but yields NAP $C_N/C$, which can be large. \\

Beyond these guarantees, the principal may wish to maximize an objective reflecting the value of the allocations (e.g., perhaps $V(A) = \sum_{i=1}^m v_i A_i$ or $V(A) = \sum_{i\in [m]}v_i\log(1 + A_i)$ for weights $v_i$; see Section~\ref{sec:computational-comparison}). Section~\ref{sec:general-losses} therefore constructs allocation regions over which the principal can optimize while preserving null deterrence and post-hoc risk control, while  Section~\ref{sec:nap} studies when this optimization is computationally tractable.

In addition to the principal's utility, we might also consider an agent's utility. In particular, a well-constructed contract should incentivize the entry of non-null agents. This will be the subject of Section~\ref{sec:nonnull-participation}.

\section{Continuous-Closure for General Losses}
\label{sec:general-losses}

We begin with a general principle for constructing post-hoc risk-safe families of mechanisms for arbitrary nonnegative losses. In particular, we introduce the continuous-closure region, whose selections satisfy \ref{eq:ic} and post-hoc risk control.
Optimizing over this can be computationally challenging, however, as it imposes a constraint for every subset of agents. This motivates the specializations and approximations developed in Section~\ref{sec:nap}.

To introduce the continuous-closure, we require the concept of an \emph{e-collection}~\citep{shafer2021testing,vovk2021values,grunwald2024beyond} which is a collection $\mathcal E=(E_S)_{S\subseteq[m]}$ of nonnegative random variables such that $E_S$ is an e-value under every joint law for which all agents in $S$ are null. Thus $\E[E_S]\leq1$ whenever $S$ consists entirely of null agents. We write $E_i=E_{\{i\}}$ for its singleton elements. Given an e-collection, define the set
\begin{equation}
\label{eq:implementable-allocation-set}
\begin{aligned}
    \mathcal X(\mathcal E)=\big\{A\in\Re_{\geq0}^m: A_i\leq c_iE_i, \;
      \|A\|_1 \leq B\big\},
\end{aligned}
\end{equation}
which is simply a set of allocations which satisfy \ref{eq:ic} and the budget constraint, given a particular e-collection.
Note that for any measurable selection $A\in\calX(\calE)$ and $i\in N$ we have $\E_N[A_i] \leq c_i \E_N[E_i] \leq c_i$.
To obtain post-hoc risk control in addition to null deterrence, given a loss $\ell$, define the \emph{continuous-closure} of $\calX$,
\begin{equation}
\label{eq:continuous-closure-region}
\begin{aligned}
    \mathcal A^\ell_\Delta(\mathcal E)
    =\big\{A\in\mathcal X(\mathcal E):\ell_S(A)\leq\Delta E_S
      \text{ for every }S\subseteq[m]\big\}.
\end{aligned}
\end{equation}
The superscript $\ell$ specifies the loss family to be used in the subset constraints.
The continuous-closure might appear unintuitive at first glance, but it is in fact close to the first thing one might try when attempting to control the expected loss. Indeed, note that for any $A\in\calA^\ell_\Delta(\calE)$, we have $\ell_N(A) \leq \Delta E_N$ by taking $S=N$. Hence $\E_N[\ell_N(A)] \leq \Delta \E_N[E_N] \leq \Delta$.

To control multiple losses simultaneously we simply gather the set of allocations which work for each loss.  This gives the following result.

\begin{theorem}
\label{thm:continuous-closure}
Fix an e-collection $\mathcal E$ and a family of losses $\calL$. Every measurable selection $A\in\calX(\calE)$ satisfies null deterrence. Moreover, for every null set $N\subseteq[m]$,
\begin{equation}
\label{eq:continuous-closure-expectation}
    \E_N\left[
      \sup_{\ell\in\mathcal L}\sup_{\Delta>0}\sup_{A\in\calA^\ell_\Delta(\calE)}
      \frac{\ell_N(A)}{\Delta}
    \right]\leq1,
\end{equation}
where $\E_N$ is understood as an outer expectation if the supremum is not measurable. In particular, if $A^{\ell,\Delta}$ is a measurable selection from $\calA^\ell_\Delta(\calE)$ for each $\ell\in\calL$ and $\Delta>0$, then
\begin{equation}
\label{eq:certified-menu-risk}
    \E_N\left[
      \sup_{\ell\in\mathcal L}\sup_{\Delta>0}
      \frac{\ell_N(A^{\ell,\Delta})}{\Delta}
    \right]\leq1.
\end{equation}
\end{theorem}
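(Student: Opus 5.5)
The plan is to prove both claims by a pointwise (samplewise) domination followed by a single expectation, exactly as in the informal remarks before the statement.

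For null deterrence, fix any measurable selection $A\in\calX(\calE)$ and a null agent $i\in N$. By definition of $\calX(\calE)$ we have $0\le A_i\le c_iE_i$ pointwise. Since $i\in N$, the singleton $\{i\}$ consists entirely of nulls, so $E_i=E_{\{i\}}$ is an e-value under every joint law in $\mathfrak P_N$. Monotonicity of expectation gives $\E_R[A_i]\le c_i\E_R[E_i]\le c_i$ for each $R\in\mathfrak P_N$. Taking the supremum over $R$ then yields $\E_N[A_i]\le c_i$, which is \ref{eq:ic}.

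For \eqref{eq:continuous-closure-expectation}, the key observation is a deterministic bound valid for every realization of the data. Fix an outcome, a loss $\ell\in\calL$, a level $\Delta>0$, and any $A\in\calA^\ell_\Delta(\calE)$. Taking $S=N$ in the defining constraint of the continuous-closure gives $\ell_N(A)\le\Delta E_N$, i.e.\ $\ell_N(A)/\Delta\le E_N$. The right-hand side does not depend on $\ell$, $\Delta$, or $A$. Hence
\begin{equation*}
\sup_{\ell\in\calL}\sup_{\Delta>0}\sup_{A\in\calA^\ell_\Delta(\calE)}\frac{\ell_N(A)}{\Delta}\le E_N
\end{equation*}
pointwise, with the convention that the supremum over an empty set is $0\le E_N$, using nonnegativity of $E_N$. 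The variable $E_N$ is measurable and, since every agent in $N$ is null under each $R\in\mathfrak P_N$, it satisfies $\E_R[E_N]\le1$. The outer expectation of the left-hand side is at most the expectation of any measurable majorant, so it is at most $\E_R[E_N]\le1$; taking the supremum over $R$ gives the claim. The case $N=\emptyset$ needs a brief check: it is covered because $E_\emptyset$ is an e-value under every law by definition, or alternatively because $\ell_\emptyset$ is still bounded by $\Delta E_\emptyset$ through the constraint with $S=\emptyset$.

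Finally, \eqref{eq:certified-menu-risk} follows immediately. For each $\ell$ and $\Delta$, the selection $A^{\ell,\Delta}$ lies in $\calA^\ell_\Delta(\calE)$, so the integrand in \eqref{eq:certified-menu-risk} is pointwise dominated by the triple supremum, and hence by $E_N$. This step uses the paper's standing measurability assumption for the supremum over $\ell,\Delta$. There is no real obstacle in this argument. The only delicate point is handling possibly non-measurable suprema over uncountable families, and this is resolved by working with outer expectation against the measurable dominating variable $E_N$.
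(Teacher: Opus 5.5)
Your proposal is correct and follows the paper's proof essentially verbatim. Null deterrence comes from $A_i\le c_iE_i$, and the risk bound comes from the pointwise domination $\ell_N(A)/\Delta\le E_N$ given by the $S=N$ constraint, uniform in $\ell$, $\Delta$, and $A$, followed by a single expectation; your extra care about empty suprema, the case $N=\emptyset$, and outer expectation via the measurable majorant $E_N$ only makes explicit what the paper leaves implicit.
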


As in~\eqref{eq:expected-risk}, we emphasize that in both~\eqref{eq:continuous-closure-expectation} and~\eqref{eq:certified-menu-risk}, the set $N$ of null agents remains fixed throughout the suprema and the
expectation.

\begin{proof}[Proof of Theorem~\ref{thm:continuous-closure}]
For any measurable selection $A\in\calX(\calE)$ we have $A_i\leq c_iE_i$, which implies null deterrence after taking expectations. Moreover, for any $\ell\in\mathcal L$, $\Delta>0$, and $A\in\calA_\Delta^\ell(\calE)$, the constraint indexed by the true null set gives $\ell_N(A)/\Delta \leq E_N$. Since the right-hand side does not depend on $\ell$, $\Delta$, or $A$, taking the supremum over all three gives
\[
    \sup_{\ell\in\mathcal L}\sup_{\Delta>0}\sup_{A\in\calA^\ell_\Delta(\calE)}
      \frac{\ell_N(A)}{\Delta}\leq E_N,
\]
for every realization of $\calE$. Taking expectations and using $\E_N[E_N]\leq1$ gives~\eqref{eq:continuous-closure-expectation}. Finally, $\ell_N(A^{\ell,\Delta})/\Delta$ is bounded by the inner supremum in~\eqref{eq:continuous-closure-expectation} for every $\ell$ and $\Delta$, which proves~\eqref{eq:certified-menu-risk}.
\end{proof}

\begin{remark}
\label{rem:same-E}
    In the post-hoc case, using the same $\calE$ for every $\ell$ and $\Delta$ is essential in Theorem~\ref{thm:continuous-closure}.
    If each loss $\ell$ instead uses a separately valid collection $\mathcal E^\ell$, for example, the bound defining $\calA^\ell_\Delta(\calE^\ell)$ becomes $\ell_N(A^{\ell,\Delta})/\Delta\leq E_N^\ell$, and $\E_N[\sup_{\ell\in\mathcal L}E_N^\ell]$ need not be at most one. The allocation $A^{\ell,\Delta}$, on the other hand, can (and typically will) change with these parameters.
\end{remark}

The continuous-closure region is so named because it can be seen as a generalization of the closure principle in multiple testing~\citep{xu2025bringing}, in which we select a rejection set whose loss is bounded by an e-value for every possible null set. Of course, the choice in multiple testing is a binary one---we either reject the hypothesis or not. Here we need to decide how much to allocate to every agent, which significantly enlarges the decision space.

While Theorem~\ref{thm:continuous-closure} provides allocation regions whose selections yield post-hoc risk-safe families of mechanisms, these regions are not efficiently computable in general. The next step is thus to provide tractable approximations for specific loss functions. Going forward, we consider the null allocation rate.

\begin{remark}[On the choice of e-value]
\label{rem:e-value-choice}
Following \citet{bates2022principal}, the principal could allow each agent to select from a menu of e-values for $\calP$ before observing their data, with all guarantees unchanged.
Non-null agents could then choose an e-value suited to their data-generating process if they know it; powerful e-values are available for a wide range of settings~\citep{ramdas2025hypothesis}. We emphasize, however, that this choice must be made \emph{before} the data are generated, following Remark~\ref{rem:same-E}.
\end{remark}

\section{Optimal and Approximate Allocations for NAP}
\label{sec:nap}

In this section we focus specifically on $\ell=\nap$. Section~\ref{sec:nap-exact} shows that, for a suitable e-collection, the continuous-closure region is a polytope over which the principal can optimize in polynomial time, using a separation oracle based on max-flow/min-cut arguments. The resulting optimal allocation cannot generally be written in closed form, so Section~\ref{sec:nar-control} complements this procedure with simpler explicit mechanisms. Arguably, such mechanisms are more suitable for public protocols, since legibility engenders trust.
Section~\ref{sec:computational-comparison} compares the two approaches experimentally.

\subsection{Exact optimization in polynomial time}
\label{sec:nap-exact}

Our goal is to turn the region $\calA_\Delta^\nap$ into a convex region over which linear objectives can be optimized via linear programming. To do this, we will need to consider a particular e-collection. Let $E_i$ be an e-value for agent $i$, and define the \emph{cost-weighted e-collection} $\calE^{\mathrm w}$ by
\begin{equation}
\label{eq:cost-weighted-e-collection}
    E_S^\w=\frac{1}{C_S}\sum_{i\in S}c_iE_i
    \text{ for nonempty }S\subseteq[m],\qquad C_S = \sum_{i\in S}c_i.
\end{equation}
It's clear that $(E_S^\w)$ is an e-collection after taking expectations. We define $E_\emptyset^\w = 1$.

Fix a set of e-values $E_1,\dots,E_m$ and $\Delta>0$. For allocations $A_1,\dots,A_m$, write $\|A\|_1 = \sum_{i\in [m]}A_i$ for the total. The constraint defining $\calA_\Delta^\nap(\calE^\w)$ can be written as
\begin{equation}
\label{eq:pointwise-nap-certificate}
    C_S\sum_{i\in S}A_i\leq\Delta \|A\|_1\sum_{i\in S}c_iE_i,
\end{equation}
which is linear in the allocations. Thus, for realized e-values $E=(E_1,\dots,E_m)$, $\calA_\Delta^\nap(\calE^\w)$ is a convex polytope, which we denote by $\P_\Delta(E)$:
\begin{align}
    \label{eq:nap-polytope}
\P_\Delta(E) := \calA^\nap_\Delta(\mathcal E^\w) =
    \bigg\{A\in\calX(\calE^\w):  \sum_{i\in [m]} u_i(S) A_i \leq 0 \text{ for all } S\subseteq[m]
      \bigg\},
\end{align}
where $u_i(S) = \ind{i\in S} C_S - \Delta \sum_{j\in S} c_j E_j$ for all $i$. Note that $\P_\Delta(E)$ is convex because all its defining constraints are linear in $A$ for fixed $E$ and $\Delta$. Further, since allocating zero satisfies all the constraints, the polytope is nonempty.

Now, by Theorem~\ref{thm:continuous-closure}, the principal can choose any allocation in $\P_\Delta(E)$ and it will be post-hoc risk safe and satisfy~\eqref{eq:ic}. But, as we mentioned in Section~\ref{sec:contract}, the principal may also have an objective function $V(A;E)$ that it wishes to optimize over the polytope:
\begin{equation}
\label{eq:safe-allocation-optimization}
    \max_{A\in\P_\Delta(E)}V(A;E).
\end{equation}
If $V$ is linear (e.g., $V(A;E) = \sum_i A_i$) then this optimization problem is a linear program. A concave objective (e.g., $V(A;E) = \sum_i \log(1 + A_i)$) gives a convex optimization problem.

At first glance~\eqref{eq:safe-allocation-optimization} may appear intractable since $\P_\Delta(E)$ has one constraint for every subset of agents, and thus cannot be written down explicitly in polynomial time. Nevertheless, if $V$ is linear, polynomial-time optimization is possible:

\begin{proposition}
\label{prop:polytope-separation}
For fixed $E$ and $\Delta$, the polytope $\P_\Delta(E)$ admits a polynomial-time separation oracle. Consequently, linear optimization over $\P_\Delta(E)$ is polynomial-time solvable.
\end{proposition}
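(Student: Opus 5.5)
The plan is to give an explicit separation routine for the exponentially many subset constraints in \eqref{eq:nap-polytope}. Optimization then follows from the equivalence of separation and optimization via the ellipsoid method (Grötschel--Lovász--Schrijver).

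The constraints of $\calX(\calE^\w)$ are only $2m+1$ in number: $A_i\ge 0$, $A_i\le c_iE_i$, and $\|A\|_1\le B$ when $B<\infty$. They can be checked directly, and any violated one is returned as a separating hyperplane. So it remains to handle the subset constraints. Fix a candidate $A\ge 0$ and write $T=\|A\|_1$, $A(S)=\sum_{i\in S}A_i$ and $W_S=\sum_{j\in S}c_jE_j$. Summing $u_i(S)A_i$ over $i$, the subset constraint reads $F(S):=C_SA(S)-\Delta T W_S\le 0$. Separation is therefore equivalent to computing $\max_{S\subseteq[m]}F(S)$. If the maximum is positive, the maximizing $S^\star$ gives the violated linear inequality $\sum_i u_i(S^\star)A_i\le 0$ as the cut; otherwise $A\in\P_\Delta(E)$.

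To compute the maximum, write $x\in\{0,1\}^m$ for the indicator of $S$. Then
\[
F(x)=\sum_{i,j}c_iA_j\,x_ix_j-\Delta T\sum_j c_jE_j\,x_j .
\]
This is a quadratic pseudo-Boolean function whose quadratic coefficients $q_{ij}=c_iA_j$ are all nonnegative, because $c_i>0$ and $A_j\ge 0$. Hence $F$ is supermodular: a product of two nonnegative modular set functions plus a modular term. Maximizing $F$ is the same as minimizing the submodular function $-F$, which is already polynomial by general submodular minimization. For the concrete max-flow/min-cut route, I will use the identity $-x_ix_j=-x_i+x_i(1-x_j)$. This rewrites $-F$ as a linear function plus $\sum_{i\ne j}q_{ij}\,x_i(1-x_j)$, a sum of terms with nonnegative weights (the diagonal terms $q_{ii}x_i$ are linear). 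Each such term is the capacity of a directed edge $i\to j$, and the linear terms become source/sink edges after splitting them by sign. This is the standard $s$--$t$ cut construction for graph-representable energies, equivalently a maximum-weight closure problem on the pair-nodes $(i,j)$ as in Picard--Queyranne. A single min-cut on a graph with $O(m)$ nodes and $O(m^2)$ edges then returns $S^\star$ in polynomial time.

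Finally I will invoke the ellipsoid theorem. $\P_\Delta(E)$ is bounded, since $0\le A_i\le c_iE_i$. Every defining inequality has coefficients $u_i(S)$ whose encoding length is polynomial in that of $(c,E,\Delta)$, so the facet complexity is polynomial. Hence a polynomial separation oracle yields polynomial-time linear optimization. The main obstacle is not the reduction but the cut step: $F$ is quadratic in $x$, so its maximization is not obviously tractable. The sign structure ($q_{ij}\ge0$ on the quantity being maximized) is exactly what makes the problem supermodular and cut-representable, and I will verify the edge-capacity bookkeeping carefully. A secondary technical point is the term $T=\|A\|_1$: it is a fixed number for the query point, but in the returned hyperplane it must be re-expanded as the linear form $\sum_iA_i$. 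This is already done in the coefficients $u_i(S)$.
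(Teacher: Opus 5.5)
Your proposal is correct and follows essentially the same route as the paper: check the box and budget constraints directly, expand the violation $C_S A(S)-\Delta\|A\|_1\sum_{j\in S}c_jE_j$ as a quadratic pseudo-Boolean function with nonnegative pairwise coefficients, find a maximally violated subset by an $s$--$t$ minimum cut, and conclude via the Gr\"otschel--Lov\'asz--Schrijver equivalence of separation and optimization. The only difference is cosmetic: you use directed arcs $i\to j$ of capacity $c_iA_j$ via $-x_ix_j=-x_i+x_i(1-x_j)$, whereas the paper symmetrizes into arcs of capacity $(c_iA_j+c_jA_i)/2$ in both directions, and both give valid cut representations.
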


The proof is in Appendix~\ref{app:proofs}. The separation oracle finds a subset maximizing the violation of~\eqref{eq:pointwise-nap-certificate}. The intuition is as follows: The negative of this violation is a  submodular function, which can be represented as an $s$--$t$ cut objective~\citep{kolmogorov2002energy}. Thus a maximum-flow computation yields a minimum cut identifying a  violated constraint, or else certifies that all subset constraints hold (see~\citet{ahuja1993network} for background on max-flow/min-cut algorithms). The graph has $m+2$ vertices and $O(m^2)$ edges, so the maximum-flow algorithm gives a bound of $O(m^3)$ arithmetic operations per separation call. For rational input data, the resulting linear optimization guarantee is polynomial in $m$ and the input bit length.

More generally, maximizing a concave objective over $\P_\Delta(E)$ is a convex optimization problem, and approximate optimization is possible via the ellipsoid method. We omit a longer discussion here, but suffice it to say that under standard assumptions, the running time is polynomial in the input size and $\log(1/\epsilon)$ for an $\epsilon$-optimal solution.

Finally, let us provide necessary and sufficient conditions on when $\P_\Delta(E)$ contains any nonzero allocation.

\begin{proposition}
\label{prop:polytope-threshold}
The polytope $\P_\Delta(E)$ contains a nonzero allocation if and only if
\begin{equation}
\label{eq:polytope-global-threshold}
    \sum_{i=1}^m c_i E_i\geq \frac{C}{\Delta}.
\end{equation}
\end{proposition}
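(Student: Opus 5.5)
The plan is to prove the two directions separately. Necessity comes from the single constraint indexed by $S=[m]$. Sufficiency comes from writing down an explicit allocation proportional to the cost-weighted e-values. Throughout I use that the subset constraints~\eqref{eq:pointwise-nap-certificate} are positively homogeneous in $A$. Scaling $A$ by $t>0$ therefore preserves them, so only the box constraints $A_i\le c_iE_i$ and the budget $\|A\|_1\le B$ from $\calX(\calE^\w)$ limit the scale.

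\emph{Necessity.} Suppose $A\in\P_\Delta(E)$ with $A\neq 0$, so that $\|A\|_1>0$. Take $S=[m]$ in~\eqref{eq:pointwise-nap-certificate}. This gives $C\|A\|_1\le \Delta\|A\|_1\sum_{i\in[m]}c_iE_i$. Dividing by $\Delta\|A\|_1>0$ yields~\eqref{eq:polytope-global-threshold} immediately.

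\emph{Sufficiency.} Assume~\eqref{eq:polytope-global-threshold}, and write $W=\sum_{j\in[m]}c_jE_j$. Since $C>0$, we have $W\ge C/\Delta>0$. Define $A_i=t\,c_iE_i$ for a scale $t\in(0,1]$ with $t\le B/W$ (interpreting $B/W=\infty$ when $B=\infty$).

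\begin{itemize}
\item \emph{Box and budget.} We have $A_i\le c_iE_i$ because $t\le1$, and $\|A\|_1=tW\le B$ by the choice of $t$. So $A\in\calX(\calE^\w)$.
\item \emph{Nonzero.} $A\neq0$ because $\|A\|_1=tW>0$.
\item \emph{Subset constraints.} For any nonempty $S$, the constraint reads
\[
C_S\, t\sum_{i\in S}c_iE_i\;\le\;\Delta\, tW\sum_{i\in S}c_iE_i .
\]
If $\sum_{i\in S}c_iE_i=0$, both sides vanish. Otherwise the constraint reduces to $C_S\le \Delta W$. This holds because $C_S\le C\le\Delta W$, using positivity of the costs and the hypothesis.
\item \emph{Empty set.} The constraint for $S=\emptyset$ is trivial, since its left-hand side is zero.
\end{itemize}

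Hence $A\in\P_\Delta(E)$ is a nonzero allocation.

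There is no real obstacle here. The only point requiring care is choosing the candidate allocation in the sufficiency direction. Taking $A_i\propto c_iE_i$ makes the factor $\sum_{i\in S}c_iE_i$ cancel from both sides of every subset constraint. That reduces all $2^m$ constraints to the single inequality $C_S\le\Delta W$, which is implied by the $S=[m]$ case. The other small detail is the degenerate case $\sum_{i\in S}c_iE_i=0$, where no division is possible but both sides of the constraint vanish.
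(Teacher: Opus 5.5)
Your proof is correct and essentially identical to the paper's. Necessity uses the $S=[m]$ constraint; sufficiency scales the allocation $G_i=c_iE_i$ by $t=\min\{1,B/\|G\|_1\}$ and then applies $C_S\le C\le\Delta\|G\|_1$ to every subset constraint.
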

\begin{proof}
If $A\in\P_\Delta(E)$ is nonzero, constraint~\eqref{eq:pointwise-nap-certificate} with $S=[m]$ reads $C\|A\|_1\leq\Delta \|A\|_1\sum_{i}c_iE_i$, and dividing by $\|A\|_1>0$ gives~\eqref{eq:polytope-global-threshold}. Conversely, suppose~\eqref{eq:polytope-global-threshold} holds and let $G_i=c_iE_i$, so that $G\neq0$ and $C\leq\Delta \|G\|_1$. For every nonempty $S$, $C_S\sum_{i\in S}G_i\leq C\sum_{i\in S}G_i\leq\Delta \|G\|_1\sum_{i\in S}c_iE_i$, so $G$ satisfies every subset constraint. These constraints are invariant under positive scaling, so $tG$ satisfies them for $t=\min\{1,B/\|G\|_1\}$, which also satisfies the coordinate caps and the budget. Hence $tG$ is a nonzero element of $\P_\Delta(E)$.
\end{proof}

\subsection{Approximate-but-explicit policies for NAR control}
\label{sec:nar-control}

As we've just seen, the polytope $\P_\Delta(E)$ allows the principal to choose allocations by solving an optimization problem. The mechanisms in this subsection instead provide explicit policies which base allocations on simple greedy or threshold rules. Each nevertheless corresponds to optimizing a particular objective or imposing a particular constraint, which we make precise in Proposition~\ref{prop:implicit-objectives}. All three policies are built around a simple sufficient condition for membership in $\P_\Delta$,
formalized by defining the following subset:
\begin{equation}
\label{eq:compact-safe-allocation-polytope}
\Q_\Delta(E):=\bigg\{A\in\Re_{\geq0}^m: A_i\leq c_i E_i, \;
\; CA_i\leq \Delta c_i E_i \|A\|_1, \; \|A\|_1\leq B
\bigg\},
\end{equation}
where $C = \sum_{i\in[m]} c_i$ and again, $\|A\|_1 = \sum_{i\in[m]}A_i$. The polytope $\Q_\Delta$ can be seen as replacing the exponentially many subset constraints (i.e., one constraint for every subset $S\subseteq[m]$) with a single constraint per agent. Note that $\Q_\Delta$ need not be defined by an e-collection, but simply by an e-value for each agent. Following Remark~\ref{rem:e-value-choice}, each agent can choose the functional form of their e-value from
a menu of e-values for $\calP$, as long as the realized statistic $E_i = E_i(X_i)$ is computed after this choice is made.

Since $\Q_\Delta(E)\subseteq\P_\Delta(E)$, it's clear that selecting allocations from these regions yields a family of mechanisms satisfying the post-hoc risk guarantee of Theorem~\ref{thm:continuous-closure}. However, the nature of $\Q_\Delta(E)$ allows us to give the following more precise bound on the post-hoc risk. The proof is in Appendix~\ref{app:proofs}.

\begin{proposition}
\label{prop:compact-safe-allocation-polytope}
For every $E$ and $\Delta$, $\Q_\Delta(E)\subseteq\P_\Delta(E)$. Any mechanism that selects an allocation from $\Q_\Delta(E)$ for every $\Delta>0$ satisfies \ref{eq:ic} and $\risk_N(\calM,\nap)\leq C_N/C\leq 1$, where $C_N =\sum_{i\in N} c_i$ is the total cost of null agents, and $C = \sum_{i\in [m]}c_i$ is the total cost of all agents.
\end{proposition}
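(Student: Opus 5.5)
The plan is to handle the three claims in order: the inclusion $\Q_\Delta(E)\subseteq\P_\Delta(E)$, then null deterrence, then the sharper post-hoc risk bound.

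\textbf{Inclusion.} Take $A\in\Q_\Delta(E)$. The coordinate caps $A_i\le c_iE_i$ and the budget constraint are shared by both regions, so $A\in\calX(\calE^\w)$. It remains to verify~\eqref{eq:pointwise-nap-certificate} for every nonempty $S$. Summing the per-agent constraint $CA_i\le\Delta c_iE_i\|A\|_1$ over $i\in S$ gives
\[
C\sum_{i\in S}A_i\le \Delta\|A\|_1\sum_{i\in S}c_iE_i .
\]
Since $C_S\le C$ and the left-hand side is nonnegative, we get $C_S\sum_{i\in S}A_i\le C\sum_{i\in S}A_i$. Combining the two displays yields~\eqref{eq:pointwise-nap-certificate}. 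The empty set is trivial.

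\textbf{Null deterrence.} This follows directly from Theorem~\ref{thm:continuous-closure}, because $\Q_\Delta(E)\subseteq\calX(\calE^\w)$. Concretely, $\E_N[A_i]\le c_i\E_N[E_i]\le c_i$ for each $i\in N$.

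\textbf{Sharper risk bound.} I would bound the NAP pointwise, uniformly in $\Delta$, and only then take expectations. Fix $\Delta>0$ and any $A\in\Q_\Delta(E)$. If $\|A\|_1=0$, then $\nap=0$. Otherwise, sum the per-agent constraint over $i\in N$ and divide by $C\|A\|_1$:
\[
\frac{\nap_N(A)}{\Delta}=\frac{\sum_{i\in N}A_i}{\Delta\|A\|_1}\le \frac{1}{C}\sum_{i\in N}c_iE_i .
\]
The right-hand side does not depend on $\Delta$ or on the chosen allocation. Hence it also bounds $\sup_{\Delta>0}\nap_N(\calM_\Delta)/\Delta$. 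Taking $\E_N$ and using $\E_N[E_i]\le1$ for $i\in N$ gives
\[
\risk_N(\calM,\nap)\le \frac{1}{C}\sum_{i\in N}c_i=\frac{C_N}{C}\le 1 .
\]

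This step needs no measurability fuss beyond the paper's standing assumption, since the pointwise bound is by a fixed measurable random variable. I see no real obstacle. The only subtle point is to sum over the true null set $N$ directly, rather than invoking the $E_N^\w$ constraint from the inclusion. The latter would give only the weaker bound $1$, whereas summing directly produces the improved factor $C_N/C$.
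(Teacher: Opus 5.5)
Your proposal is correct and matches the paper's proof essentially step for step. The paper likewise sums the per-agent constraint over $S$ and uses $C_S\le C$ for the inclusion, and gets null deterrence from the caps $A_i\le c_iE_i$. For the risk bound it also sums over $N$ with the factor $C$, giving $\nap_N(A)/\Delta\le C^{-1}\sum_{i\in N}c_iE_i$ uniformly in $\Delta$, and only then takes the supremum and the expectation.
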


We now give several mechanisms based on $\Q_\Delta(E)$. The cumulative-evidence, all-or-nothing, and e-BH policies introduced below each select a particular point of $\Q_\Delta(E)$. Alternatively, of course, the principal can optimize a linear objective directly over $\Q_\Delta(E)$, which requires a single linear program with $O(m)$ constraints. Section~\ref{sec:computational-comparison} shows that this loses little relative to optimizing over $\P_\Delta(E)$. Still, there are reasons to present explicit mechanisms as well. For one, they are more interpretable. Second, as we'll see in Section~\ref{sec:nonnull-participation}, they allow for an analysis of the expected utility of non-null agents who opt into the contract.

\paragraph{The cumulative-evidence mechanism.}
Call $F_i = c_iE_i$ agent $i$'s \emph{potential allocation}. It is the maximum allocation allowed by $\calX(\calE)$.
We begin with a policy that funds the smallest  number of agents whose aggregate potential allocation reaches the global evidence threshold in~\eqref{eq:polytope-global-threshold}. (Note that if this threshold is not met, then $\P_\Delta(E)=\Q_\Delta(E)=\{0\}$.)

\begin{tcolorbox}[
    colback=pink,
    colframe=white,
    boxrule=0pt,
    sharp corners,
    left=6pt,
    right=6pt,
    top=6pt,
    bottom=6pt,
    breakable
]
\textbf{Cumulative-evidence mechanism $\Mcum_\Delta$}
\begin{enumerate}[nosep,topsep=0.1cm]
    \item Given $X_1,\dots,X_m$, compute $E_i=E_i(X_i)$ and $F_i=c_iE_i$, and sort the potential allocations as $F_{(1)}\geq\cdots\geq F_{(m)}$, breaking ties arbitrarily.
    \item If $\sum_{i=1}^mF_i<C/\Delta$, set $\tilde A_i=0$ for every agent. Otherwise, define
    \[
        k^{\rm cum}=\min\bigg\{k\in[m]:
        \sum_{j=1}^kF_{(j)}\geq\frac C\Delta\bigg\}.
    \]
    Set $\tilde A_{(j)}=F_{(j)}$ for $j<k^{\rm cum}$, and $\tilde A_{(k^{\rm cum})}
    =\frac C\Delta-\sum_{j<k^{\rm cum}}F_{(j)}$. Set all remaining allocations to zero.
    \item\label{m1} Write $W=\sum_i\tilde A_i$. If $W>B$, set $A_i=B\tilde A_i/W$; otherwise set $A_i=\tilde A_i$.
\end{enumerate}
\end{tcolorbox}

The final step rescales the raw allocations  when they exceed the budget. Since $A_i\leq\tilde A_i$, such rescaling preserves null deterrence, and multiplying everything by a common factor leaves the NAP unchanged. Appendix~\ref{sec:finite-budget-methods} discusses an alternative scaling method.


\paragraph{The all-or-nothing mechanism.}
For comparison, we next give a very simple mechanism which either allocates every agent their potential allocation or allocates to no one. We call this the all-or-nothing mechanism, $\Mall$.

\begin{tcolorbox}[
    colback=pink,
    colframe=white,
    boxrule=0pt,
    sharp corners,
    left=6pt,
    right=6pt,
    top=6pt,
    bottom=6pt,
    breakable
]
\textbf{All-or-nothing mechanism $\Mall_\Delta$}
\begin{enumerate}[nosep,topsep=0.1cm]
    \item Given $X_1,\dots,X_m$, compute $E_i=E_i(X_i)$ and $F_i=c_iE_i$.
    \item If $\sum_{i\in[m]} F_i \geq C/\Delta$, set $\tilde A_i = F_i$ for all $i\in[m]$. Otherwise set $\tilde A_i = 0$.
    \item Write $W=\sum_i\tilde A_i$. If $W>B$, set $A_i=B\tilde A_i/W$; otherwise set $A_i=\tilde A_i$.
\end{enumerate}
\end{tcolorbox}

Note that the activation condition is the same as $\Mcum$ and is exactly condition~\eqref{eq:polytope-global-threshold}. This may seem surprising at first glance. Isn't allocating to every agent demanding? One might expect the threshold to be much stronger than the minimum one. But $\Mall$ allocates in proportion to the potential allocations, so agents with weak evidence receive correspondingly little.

The simplicity of $\Mall$ comes with downsides. In particular, since all agents receive an allocation once the threshold is met, in situations where there are a few agents with extremely large potential allocations and a large number of agents with small to moderate potential allocations, all agents in the latter group receive an allocation. If mistaken allocations are especially costly, we might wish to be more conservative.

Motivated by this possible downside, we now present a more selective mechanism $\Mbh$ which, in these situations, would allocate only to agents with larger potential allocations.

\paragraph{The e-BH mechanism.}
This mechanism draws on the e-BH procedure of \citet{wang2022false}, which in turn adapts the Benjamini--Hochberg (BH) procedure~\citep{benjamini1995controlling} to e-values.
\begin{tcolorbox}[
    colback=pink,
    colframe=white,
    boxrule=0pt,
    sharp corners,
    left=6pt,
    right=6pt,
    top=6pt,
    bottom=6pt,
    breakable
]
\textbf{e-BH mechanism $\Mbh_\Delta$}
\begin{enumerate}[nosep,topsep=0.1cm]
    \item Given $X_1,\dots,X_m$, compute $E_i=E_i(X_i)$ and $F_i=c_iE_i$.
    \item Let $F_{(1)} \geq F_{(2)}\geq \dots\geq F_{(m)}$ be the order statistics. Define
    \begin{equation}
        k^*=\max\left\{k\in[m]: kF_{(k)}\geq \frac{C}{\Delta}\right\},\text{ with }\max\emptyset:=0.
    \end{equation}
    \item For all $i\leq k^*$, set $\tilde A_{(i)} = F_{(i)}$. For all other $i$, set $\tilde A_{(i)} = 0$.
    \item Write $W=\sum_i\tilde A_i$. If $W>B$, set $A_i=B\tilde A_i/W$; otherwise set $A_i=\tilde A_i$.
\end{enumerate}
\end{tcolorbox}

When all participation costs are equal, the selection condition becomes $kE_{(k)}\geq m/\Delta$, exactly the threshold used by the e-BH multiple-testing procedure at level $\Delta$. Thus, in this case, the funded agents coincide with the rejected hypotheses. For unequal costs, selection corresponds to weighted e-BH with weights $w_i=mc_i/C$~\citep{wang2022false}. The difference, of course, lies in the allocation: multiple testing returns binary rejection decisions, whereas our mechanism assigns each selected agent $c_iE_i$, followed by budget rescaling if necessary.

Neither $\Mall$ nor $\Mbh$ uniformly dominates the other. The AoN mechanism always activates when e-BH does, but e-BH withholds allocation from agents whose potential allocations are too small to satisfy the threshold. Whether this is preferable depends on the principal's objective.

Now let's turn to formal guarantees provided by these three mechanisms. We begin by observing that they do indeed belong to $\Q_\Delta(E)$, thereby inheriting the \ref{eq:ic} and post-hoc risk guarantees of Proposition~\ref{prop:compact-safe-allocation-polytope}. The proof is in Appendix~\ref{app:proofs}.
Figure~\ref{fig:mechanisms-polytope} illustrates Proposition~\ref{prop:mechanisms-in-polytope} for three agents.

\begin{proposition}
\label{prop:mechanisms-in-polytope}
For every realization $E$, level $\Delta>0$, and budget $B\in(0,\infty]$, the final allocations produced by $\Mcum_\Delta$, $\Mall_\Delta$, and $\Mbh_\Delta$ belong to $\Q_\Delta(E)$, and hence to $\P_\Delta(E)$.
\end{proposition}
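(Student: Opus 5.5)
The plan is to verify the three defining constraints of $\Q_\Delta(E)$ directly for each mechanism. Concretely, we check $A_i\le c_iE_i=F_i$, the per-agent ratio constraint $CA_i\le\Delta F_i\|A\|_1$, and $\|A\|_1\le B$. Membership in $\P_\Delta(E)$ then follows from the inclusion $\Q_\Delta(E)\subseteq\P_\Delta(E)$ of Proposition~\ref{prop:compact-safe-allocation-polytope}.

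The first step is a reduction that handles budget rescaling once for all three mechanisms. Rescaling multiplies the raw allocation $\tilde A$ by $t=\min\{1,B/W\}\in(0,1]$, and the three constraints behave well under this map.
\begin{itemize}
\item The cap $A_i\le F_i$ is preserved, since $t\le 1$.
\item The ratio constraint is homogeneous of degree one on both sides, so it is invariant under scaling.
\item The budget holds after scaling by construction.
\end{itemize}
It therefore suffices to show that $\tilde A$ satisfies $\tilde A_i\le F_i$ and $C\tilde A_i\le\Delta F_i\|\tilde A\|_1$. The zero allocation trivially satisfies everything, so we may assume the mechanism activates.

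For $\Mall$, activation gives $\tilde A=F$ and $\|F\|_1\ge C/\Delta$. Then $C F_i\le \Delta F_i\|F\|_1$ holds immediately.

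For $\Mbh$, with $k^*\ge1$, the funded set is the top $k^*$ order statistics. Its total is at least $k^*F_{(k^*)}\ge C/\Delta$. For any funded $i$ we then get $C\tilde A_i=CF_i\le \Delta F_i\|\tilde A\|_1$, and unfunded agents are zero.

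For $\Mcum$, the caps need one check. For $j<k^{\rm cum}$ the cap is immediate. For the last index $k=k^{\rm cum}$, minimality gives $\sum_{j<k}F_{(j)}<C/\Delta$, so $\tilde A_{(k)}>0$. Also $\sum_{j\le k}F_{(j)}\ge C/\Delta$, so $\tilde A_{(k)}\le F_{(k)}$. The total is exactly $\|\tilde A\|_1=C/\Delta$. The ratio constraint $C\tilde A_i\le \Delta F_i\cdot C/\Delta=CF_i$ then reduces to the cap $\tilde A_i\le F_i$, which we have just shown.

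The only mildly delicate point is the last funded agent in $\Mcum$, whose partial allocation must be shown to be nonnegative and at most $F_{(k)}$. This follows from the minimality of $k^{\rm cum}$. Ties in the ordering are harmless, because every bound used depends only on the sorted values.
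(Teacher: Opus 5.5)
Your proposal is correct and follows the paper's proof essentially step for step. You verify the caps and the per-agent constraint $C\tilde A_i\le\Delta F_i\|\tilde A\|_1$ for each raw allocation from the activation threshold ($\|\tilde A\|_1=C/\Delta$ for $\Mcum$, and $\|\tilde A\|_1\ge k^*F_{(k^*)}\ge C/\Delta$ for $\Mbh$), then note that common-factor budget rescaling preserves every constraint of $\Q_\Delta(E)$, and conclude with Proposition~\ref{prop:compact-safe-allocation-polytope}. Your explicit check, via minimality of $k^{\rm cum}$, that the last funded agent's partial allocation lies in $(0,F_{(k^{\rm cum})}]$ fills in a step the paper simply asserts.
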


\begin{figure}[t]
    \centering
\definecolor{polyblue}{RGB}{46,96,160}
\definecolor{polyred}{RGB}{176,52,42}
\begin{tikzpicture}[x=4.2cm, y=4.2cm, line join=round]
  \fill[polyblue!6] (0.766,-0.241) -- (0.000,0.000) -- (0.000,0.834) -- (0.766,0.594) -- cycle;
  \draw[polyblue!45, dashed, line width=0.4pt] (0.766,-0.241) -- (0.000,0.000) -- (0.000,0.834) -- (0.766,0.594) -- cycle;
  \fill[polyblue!6] (0.000,0.834) -- (0.000,0.000) -- (-0.257,-0.022) -- (-0.257,0.720) -- cycle;
  \draw[polyblue!45, dashed, line width=0.4pt] (0.000,0.834) -- (0.000,0.000) -- (-0.257,-0.022) -- (-0.257,0.720) -- cycle;
  \fill[polyblue!6] (0.509,-0.356) -- (-0.181,-0.139) -- (0.000,0.000) -- (0.766,-0.241) -- cycle;
  \draw[polyblue!45, dashed, line width=0.4pt] (0.509,-0.356) -- (-0.181,-0.139) -- (0.000,0.000) -- (0.766,-0.241) -- cycle;
  \fill[polyblue!6] (-0.257,-0.022) -- (0.000,0.000) -- (-0.181,-0.139) -- cycle;
  \draw[polyblue!45, dashed, line width=0.4pt] (-0.257,-0.022) -- (0.000,0.000) -- (-0.181,-0.139) -- cycle;
  \draw[-stealth, line width=0.5pt] (0.000,0.000) -- (-0.399,-0.178) node[pos=1.21] {$A_1$};
  \draw[-stealth, line width=0.5pt] (0.000,0.000) -- (0.935,-0.294) node[pos=1.09] {$A_2$};
  \draw[-stealth, line width=0.5pt] (0.000,0.000) -- (0.000,1.038) node[pos=1.09] {$A_3$};
  \fill[polyred!37!white] (0.509,0.201) -- (0.279,0.551) -- (-0.104,0.672) -- (0.509,-0.263) -- cycle;
  \draw[polyred, line width=0.6pt] (0.509,0.201) -- (0.279,0.551) -- (-0.104,0.672) -- (0.509,-0.263) -- cycle;
  \fill[polyred!29!white] (0.509,0.201) -- (0.509,-0.263) -- (0.766,0.223) -- (0.766,0.594) -- (0.702,0.565) -- cycle;
  \draw[polyred, line width=0.6pt] (0.509,0.201) -- (0.509,-0.263) -- (0.766,0.223) -- (0.766,0.594) -- (0.702,0.565) -- cycle;
  \fill[polyred!59!white] (0.702,0.565) -- (0.766,0.594) -- (0.460,0.690) -- (-0.104,0.672) -- (0.279,0.551) -- cycle;
  \draw[polyred, line width=0.6pt] (0.702,0.565) -- (0.766,0.594) -- (0.460,0.690) -- (-0.104,0.672) -- (0.279,0.551) -- cycle;
  \fill[polyred!56!white] (0.509,0.201) -- (0.702,0.565) -- (0.279,0.551) -- cycle;
  \draw[polyred, line width=0.6pt] (0.509,0.201) -- (0.702,0.565) -- (0.279,0.551) -- cycle;
  \fill[polyred!22!white] (-0.104,0.672) -- (0.000,0.000) -- (0.509,-0.263) -- cycle;
  \draw[polyred, line width=0.6pt] (-0.104,0.672) -- (0.000,0.000) -- (0.509,-0.263) -- cycle;
  \fill[polyblue!27!white, fill opacity=0.45] (0.279,0.551) -- (-0.257,0.720) -- (-0.257,-0.022) -- (-0.181,-0.139) -- (0.509,-0.356) -- (0.509,0.201) -- cycle;
  \fill[polyblue!19!white, fill opacity=0.45] (0.509,0.201) -- (0.509,-0.356) -- (0.766,-0.241) -- (0.766,0.594) -- (0.702,0.565) -- cycle;
  \fill[polyblue!49!white, fill opacity=0.45] (0.702,0.565) -- (0.766,0.594) -- (0.000,0.834) -- (-0.257,0.720) -- (0.279,0.551) -- cycle;
  \fill[polyblue!46!white, fill opacity=0.45] (0.509,0.201) -- (0.702,0.565) -- (0.279,0.551) -- cycle;
  \draw[polyblue, line width=0.6pt] (0.279,0.551) -- (-0.257,0.720) -- (-0.257,-0.022) -- (-0.181,-0.139) -- (0.509,-0.356) -- (0.509,0.201) -- cycle;
  \draw[polyblue, line width=0.6pt] (0.509,0.201) -- (0.509,-0.356) -- (0.766,-0.241) -- (0.766,0.594) -- (0.702,0.565) -- cycle;
  \draw[polyblue, line width=0.6pt] (0.702,0.565) -- (0.766,0.594) -- (0.000,0.834) -- (-0.257,0.720) -- (0.279,0.551) -- cycle;
  \draw[polyblue, line width=0.6pt] (0.509,0.201) -- (0.702,0.565) -- (0.279,0.551) -- cycle;
  \draw[polyred, line width=0.8pt] (0.509,0.201) -- (0.279,0.551) -- (-0.104,0.672) -- (0.509,-0.263) -- cycle;
  \draw[polyred, line width=0.8pt] (0.509,0.201) -- (0.509,-0.263) -- (0.766,0.223) -- (0.766,0.594) -- (0.702,0.565) -- cycle;
  \draw[polyred, line width=0.8pt] (0.702,0.565) -- (0.766,0.594) -- (0.460,0.690) -- (-0.104,0.672) -- (0.279,0.551) -- cycle;
  \draw[polyred, line width=0.8pt] (0.509,0.201) -- (0.702,0.565) -- (0.279,0.551) -- cycle;
  \draw[polyred, line width=0.8pt] (-0.104,0.672) -- (0.000,0.000) -- (0.509,-0.263) -- cycle;
  \draw[polyblue, line width=0.4pt] (-0.181,0.529) -- (-0.357,0.609) node[left] {$\P_\Delta(E)$};
  \draw[polyred, line width=0.4pt] (0.135,0.136) -- (-0.357,0.086) node[left] {$\Q_\Delta(E)$};
  \draw[-stealth, line width=0.6pt, shorten <=2.5pt] (0.766,0.223) -- (1.016,0.223) node[right] {$\Mcum_\Delta$};
  \draw[-stealth, line width=0.6pt, shorten <=2.5pt] (0.443,0.416) -- (1.016,0.416) node[right] {$\Mall_\Delta$};
  \draw[-stealth, line width=0.6pt, shorten <=2.5pt] (0.766,0.594) -- (1.016,0.594) node[right] {$\Mbh_\Delta$};
  \fill[black, draw=white, line width=0.4pt] (0.766,0.223) circle[radius=1.5pt];
  \fill[black, draw=white, line width=0.4pt] (0.443,0.416) circle[radius=1.5pt];
  \fill[black, draw=white, line width=0.4pt] (0.766,0.594) circle[radius=1.5pt];
\end{tikzpicture}
    \caption{The polytopes $\P_\Delta(E)$ (blue) and $\Q_\Delta(E)$ (red) for $m=3$ agents with $c_i=1$, $E=(8,20,18)$, $\Delta=0.1$, and $B=40$, along with the allocations selected by $\Mcum_\Delta$, $\Mall_\Delta$, and $\Mbh_\Delta$. Both regions satisfy the caps $A_i\leq c_iE_i$ and the budget.
    $\Mall_\Delta$ funds all three agents, $\Mbh_\Delta$ drops agent 1, whose evidence is weakest, and $\Mcum_\Delta$ funds agent 2 fully and gives agent 3 the remainder needed to reach $C/\Delta$.}
    \label{fig:mechanisms-polytope}
\end{figure}


We stated above that the mechanisms $\Mcum$, $\Mall$, and $\Mbh$ are all optimal for a particular objective. To formalize this, we need to introduce two pieces of notation. First, let us call a set $T\subseteq[m]$ \emph{self-consistent} if $F_i\geq C/(\Delta|T|)$ for every $i\in T$. Intuitively, each element $i$ of a self-consistent set $T$ must contribute at least an equal share of the threshold~\eqref{eq:polytope-global-threshold}, so larger sets permit smaller individual potential allocations. Second, define
\begin{equation}
    V_{\min}(A)=\min_{i:F_i>0}\frac{A_i}{F_i},
\end{equation}
which measures the smallest allocation-to-potential-allocation ratio among agents with positive potential allocations. We set $V_{\min}(A)=0$ when $F_i=0$ for every $i$, in which case the only feasible allocation is zero.

\begin{proposition}
\label{prop:implicit-objectives}
Fix $E$ and $\Delta$.
\begin{enumerate}[nosep,topsep=0.1cm]
    \item The allocation of $\Mall_\Delta$ is the unique maximizer of $V_{\min}$ over $\P_\Delta(E)$ and $\Q_\Delta(E)$. Furthermore, when $B=\infty$, it maximizes every coordinate-wise nondecreasing objective over both regions.
    \item Whenever $\Mcum_\Delta$ activates, its final allocation has minimum support size among all nonzero elements of $\Q_\Delta(E)$. Otherwise, $\Q_\Delta(E)=\{0\}$. It activates if and only if $\Mall$ activates. Moreover, whenever $\Mbh$ activates, $k^{\rm cum}\leq k^*$, but the converse activation implication need not hold.
    \item The raw allocation of $\Mbh_\Delta$ is $F\mathbf{1}_{T^*}$, where $T^*$ is the largest self-consistent set. In particular, every self-consistent $T$ satisfies $T\subseteq T^*$. Consequently, the allocation maximizes every coordinate-wise nondecreasing objective over $\{F\mathbf{1}_T:T\text{ self-consistent}\}$.
\end{enumerate}
\end{proposition}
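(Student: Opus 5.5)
The plan is to treat the three parts separately. Each reduces to elementary comparisons built from three earlier facts: the cap $A\leq F$ coordinatewise, Proposition~\ref{prop:polytope-threshold}, and Proposition~\ref{prop:mechanisms-in-polytope}. Throughout I write $F=(F_1,\dots,F_m)$ with $F_i=c_iE_i$, and $\tau=C/\Delta$.

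\emph{Part 1.} First suppose $\|F\|_1<\tau$. By Proposition~\ref{prop:polytope-threshold}, $\P_\Delta(E)=\{0\}$, and since $\Q_\Delta(E)\subseteq\P_\Delta(E)$ also $\Q_\Delta(E)=\{0\}$; the $\Mall_\Delta$ allocation is $0$ and everything is trivial.

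Otherwise $\Mall_\Delta$ returns $tF$ with $t=\min\{1,B/\|F\|_1\}$, so $V_{\min}=t$. This point lies in both regions by Proposition~\ref{prop:mechanisms-in-polytope}.

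For the upper bound, take any $A\in\P_\Delta(E)$ and set $s=V_{\min}(A)$, so that $A\geq sF$ coordinatewise. I will show $s\leq t$ in two steps:
\begin{itemize}
\item the caps $A_i\leq F_i$ give $s\leq 1$;
\item summing gives $B\geq\|A\|_1\geq s\|F\|_1$, so $s\leq B/\|F\|_1$.
\end{itemize}

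For uniqueness, suppose $s=t$.
\begin{itemize}
\item If $t=1$, then $F\leq A\leq F$, so $A=F$.
\item If $t=B/\|F\|_1$, then $A\geq tF$ and $\|A\|_1\leq B=\|tF\|_1$ force $A=tF$.
\end{itemize}
Coordinates with $F_i=0$ are zero in both cases because of the cap. When $B=\infty$, we have $t=1$, and every $A$ in either region satisfies $A\leq F$. Hence any coordinatewise nondecreasing objective is maximized at $F$.

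\emph{Part 2.} Let $A\in\Q_\Delta(E)$ be nonzero with support $S$. Summing the per-agent constraints $CA_i\leq\Delta F_i\|A\|_1$ over $i\in S$ and dividing by $\|A\|_1$ gives $\sum_{i\in S}F_i\geq\tau$. A set of size $k$ has $F$-sum at most that of the top $k$ order statistics, so $|S|\geq k^{\rm cum}$.

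I then check that the $\Mcum_\Delta$ allocation has support exactly $k^{\rm cum}$. The last funded agent receives $\tau-\sum_{j<k^{\rm cum}}F_{(j)}>0$, which is positive by minimality of $k^{\rm cum}$. The allocation lies in $\Q_\Delta(E)$ by Proposition~\ref{prop:mechanisms-in-polytope}, and the budget rescaling does not change supports. This gives minimality.

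If $\Mcum_\Delta$ does not activate, then $\|F\|_1<\tau$, and $\Q_\Delta(E)=\{0\}$ by Proposition~\ref{prop:polytope-threshold}. Activation of $\Mcum_\Delta$ and $\Mall_\Delta$ is literally the same condition. If $\Mbh_\Delta$ activates, then
\[
\sum_{j\leq k^*}F_{(j)}\geq k^*F_{(k^*)}\geq\tau,
\]
so $k^{\rm cum}\leq k^*$. For the failure of the converse I will use $m=2$, $c_i=1$, $\Delta=1$, $F=(1.5,0.5)$. Here the total is $2=\tau$, so $\Mcum_\Delta$ activates. But $1\cdot F_{(1)}=1.5<2$ and $2\cdot F_{(2)}=1<2$, so $\Mbh_\Delta$ does not.

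\emph{Part 3.} Note that $T$ is self-consistent if and only if $\min_{i\in T}F_i\geq\tau/|T|$. The top-$k^*$ set $T^*$ is therefore self-consistent, since its minimum is $F_{(k^*)}$.

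Next I bound the size of any self-consistent $T$ with $|T|=k$. Its $k$ elements all have $F\geq\tau/k$, so $F_{(k)}\geq\tau/k$, and hence $k\leq k^*$.

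The main obstacle is the containment $T\subseteq T^*$ when ties occur in the ordering. Rather than comparing individual order statistics, I will use a union argument. If some $T$ were not contained in $T^*$, then $U=T\cup T^*$ has $|U|=K>k^*$. Every element of $U$ has $F\geq\min\{\tau/|T|,\tau/k^*\}\geq\tau/K$, so $U$ is self-consistent. This contradicts the size bound just proved.

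Consequently $T^*$ is the largest self-consistent set and is unique. The raw $\Mbh_\Delta$ allocation is $F\mathbf{1}_{T^*}$ by construction. Finally, $F\mathbf{1}_T\leq F\mathbf{1}_{T^*}$ coordinatewise for every self-consistent $T$, which yields the optimality claim.
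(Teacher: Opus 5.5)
Your proof is correct. Parts 1 and 2 follow the paper's argument essentially step for step. Your counterexample $F=(1.5,0.5)$, $\tau=2$ plays the role of the paper's $F=(60,40)$, $C/\Delta=100$. You also check explicitly that the last funded agent of $\Mcum_\Delta$ receives a positive amount, a detail the paper leaves implicit.

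The one genuine difference is the containment $T\subseteq T^*$ in part 3. The paper uses maximality of $k^*$ directly. When $1\leq k^*<m$, maximality gives $(k^*+1)F_{(k^*+1)}<C/\Delta$. Hence $F_{(k^*+1)}<C/(\Delta k^*)\leq C/(\Delta k)\leq F_i$ for every $i$ in a self-consistent $T$ of size $k$. Any index with $F_i>F_{(k^*+1)}$ must lie among the top $k^*$. Because this inequality is strict, ties cause no trouble, so the obstacle you anticipated does not actually arise on the order-statistic route.

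Your route instead observes that self-consistency is preserved under unions, since each member's threshold $\tau/|T|$ only decreases as the set grows, and combines this with the size bound $|T|\leq k^*$. This is the more structural argument: it shows the self-consistent sets form a union-closed family with a unique maximum, independent of tie-breaking. The paper's version is more quantitative, exhibiting an explicit strict gap $F_{(k^*+1)}<C/(\Delta k^*)\leq F_{(k^*)}$ between funded and unfunded agents.
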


The proof is in Appendix~\ref{app:proofs}. It's perhaps worth elaborating slightly on the above result. It shows that AoN is itself an instance of~\eqref{eq:safe-allocation-optimization}, with the concave objective $V_{\min}$, whose optimum happens to be available in closed form. The cumulative rule instead minimizes the number of funded agents. This objective is not concave, and the result holds over $\Q_\Delta(E)$ but not over $\P_\Delta(E)$. Finally, e-BH optimizes under a self-consistency constraint rather than a different objective. Moreover, when $B=\infty$, its allocation is coordinate-wise dominated by that of AoN, so its justification lies in excluding agents whose evidence is weak on its own.

\subsection{Experimental comparison}
\label{sec:computational-comparison}

We compare five mechanisms: optimization over $\P_\Delta(E)$, optimization over $\Q_\Delta(E)$, and the three explicit policies $\Mcum$, $\Mall$, and $\Mbh$. We measure their performance on two objective functions: the linear objective $V_{\rm lin}(A)=\sum_i v_i A_i$ and the concave objective $V_{\log}(A)=\sum_i v_i\log(1+A_i)$, where $v_i$ are observed positive weights. We sample these weights independently from a log-normal distribution with mean one and log-scale standard deviation $\sigma_v$. (The weights are, of course, held fixed during the optimization.) We vary $\sigma_v$ from $0$ (equal weights) to $1.2$, reusing the same evidence and participation costs at each value. We generate costs independently from log-normal distributions and use likelihood-ratio e-values. Details and further experiments are in Appendix~\ref{app:experiments}.

\begin{figure}[t]
    \centering
    \includegraphics[width=\textwidth]{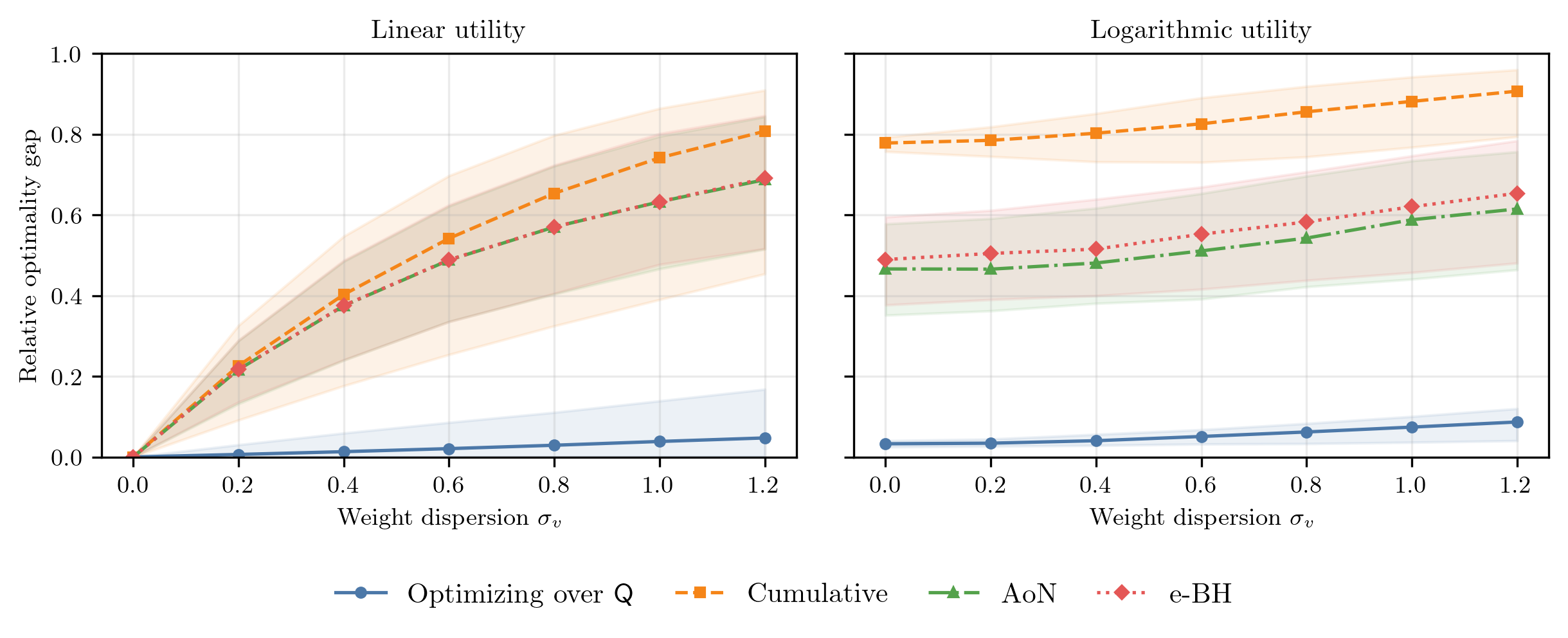}
    \caption{The relative optimality gap of various mechanisms compared with optimizing over the full continuous-closure region $\P_\Delta$, as the dispersion $\sigma_v$ of the weights varies, for linear utility (left) and logarithmic utility (right). Lines show medians and shaded bands show interquartile ranges over 150 independently generated active instances with $m=30$, $B=30$, and $\Delta=0.1$. The same instances are used at every value of $\sigma_v$, and $\sigma_v=0$ corresponds to equal weights.}
    \label{fig:optimality-gap-runtime}
\end{figure}

Figure~\ref{fig:optimality-gap-runtime} shows that optimization over $\Q_\Delta$ loses little relative to optimizing over $\P_\Delta$ at every level of dispersion. Its median relative gap grows with $\sigma_v$, but is at most $4.7\%$ for the linear objective and $8.7\%$ for the concave objective. At $\sigma_v=0.8$, for instance, its median and 90th-percentile gaps are $2.9\%$ and $19.6\%$ for the linear objective, and $6.2\%$ and $11.6\%$ for the concave objective. The explicit policies do not use the weights (they are maximizing their own implicit objectives, as per Proposition~\ref{prop:implicit-objectives}) and their gaps grow more quickly as the weights become more dispersed. At $\sigma_v=0.8$, their median gaps range from $57.0\%$ to $65.4\%$ for the linear objective and from $54.3\%$ to $85.6\%$ for the concave objective.

When $\sigma_v=0$, the linear objective $\sum_i A_i$ depends only on the total allocation. Whenever a nonzero allocation is feasible, $\Mall$ attains $\min\{B,\sum_i F_i\}$, the largest total allowed by the budget and coordinate bounds, so it is optimal over both $\P_\Delta(E)$ and $\Q_\Delta(E)$. Every other feasible allocation spending the full budget is also optimal, even if it distributes resources very differently. In our instances, every mechanism is optimal in this case, except on a single instance on which $\Mbh$ does not activate. Heterogeneous weights are what make these differences matter for the linear objective. For the concave objective, by contrast, substantial gains from optimization remain even with equal weights.  Optimization over $\Q_\Delta$ has a median gap of $3.3\%$ and a 90th-percentile gap of $4.8\%$, while the median gaps of the explicit policies range from $46.6\%$ to $77.8\%$.

Intriguingly, the optimizers allocate more to null agents than the explicit policies do, but the amounts remain small. Across all values of $\sigma_v$, the average realized null allocation proportion over these instances is at most $3.9\%$ when optimizing over $\P_\Delta$ and $2.9\%$ when optimizing over $\Q_\Delta$, compared with at most $0.22\%$ for the explicit policies. All of these are well below $\Delta=0.1$.

Figure~\ref{fig:optimality-gap-scaling} examines how allocation quality changes with the number of agents, fixing $\sigma_v=0.8$. As $m$ increases from 10 to 100, optimizing over $\Q_\Delta$ has a median gap remaining below $8\%$ for both objectives, rising from $3.7\%$ to $6.8\%$ for linear utility and from $5.1\%$ to $7.5\%$ for concave utility. The explicit policies' gaps generally increase more substantially, reaching $67.7\%$--$73.2\%$ for linear utility and $64.5\%$--$91.7\%$ for concave utility at $m=100$. These comparisons keep $B=30$ fixed, so increasing the number of agents also increases competition for the same resources.

\begin{figure}[t]
    \centering
    \includegraphics[width=\textwidth]{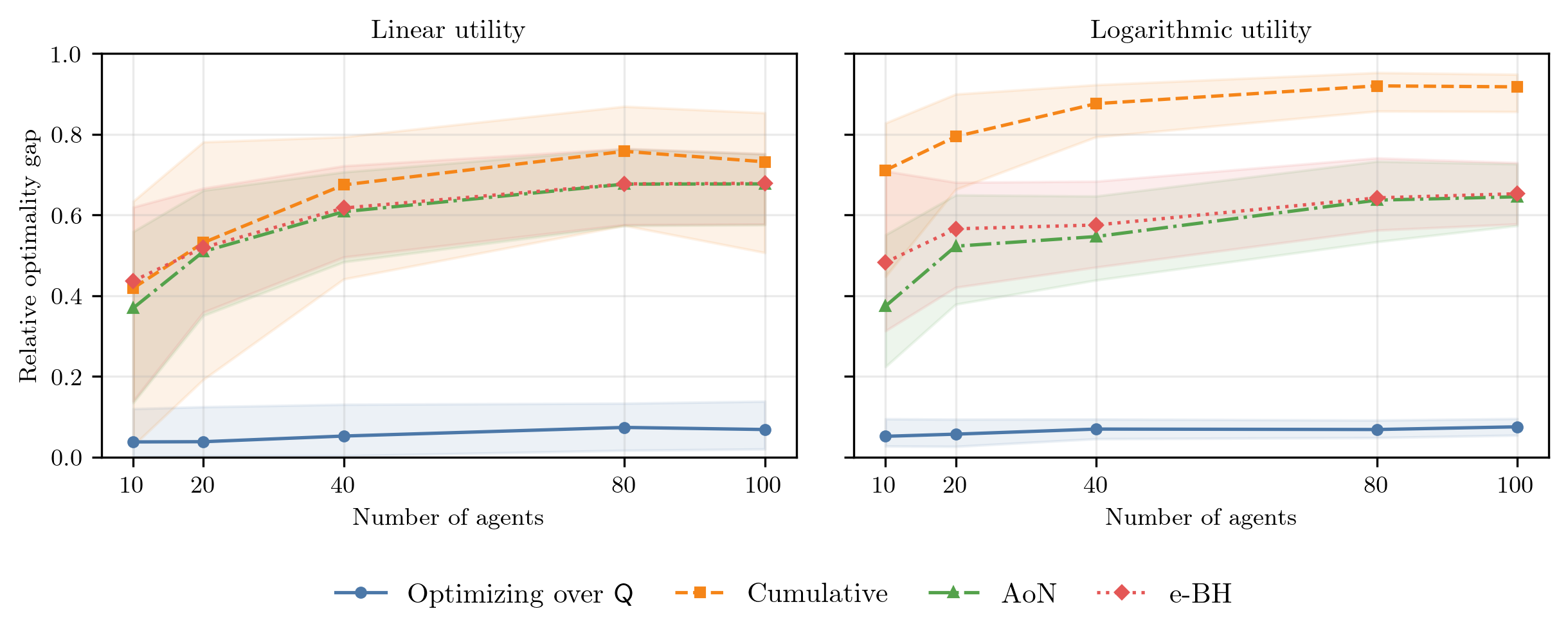}
    \caption{Relative optimality gap as the number of agents varies, for linear utility (left) and logarithmic utility (right). Lines show medians and shaded bands show interquartile ranges over 150 independently generated active instances at each size, with $B=30$, $\Delta=0.1$, and $\sigma_v=0.8$ fixed. As in Figure~\ref{fig:optimality-gap-runtime}, the gap is measured relative to optimization over $\P_\Delta$.}
    \label{fig:optimality-gap-scaling}
\end{figure}

Overall, we find that optimizing over $\Q_\Delta$ loses little relative to optimizing over $\P_\Delta$. The explicit mechanisms lose more, including under equal weights for the logarithmic objective. This is to be expected given Proposition~\ref{prop:implicit-objectives}, since each is maximizing a specific objective that is distinct from that of the principal.


\FloatBarrier
\section{Non-Null Participation}
\label{sec:nonnull-participation}

Null deterrence ensures that null agents cannot profit in expectation from entering the contract. At the same time, we also want to encourage the participation of non-null agents. One might hope that any non-null agent with a large e-value will be incentivized to participate, but this need not be the case, as the following example shows.

\begin{example}
\label{ex:participation-dilution}
Consider two agents, each collecting iid observations $X_{i,t}\sim\mathcal N(\mu_i,1)$, with null hypothesis $\mu_i=0$. The likelihood ratios
\begin{equation}
E_{i,n}=\exp\left(\sum_{t=1}^n X_{i,t}-n/2\right)
\end{equation}
are valid e-values under the relevant distributions. If $\mu_1=1$ and $\mu_2=2$, then $n^{-1}\log E_{1,n}\xrightarrow{a.s.}1/2$ and $n^{-1}\log E_{2,n}\xrightarrow{a.s.}3/2$. Both agents' evidence diverges, and both are selected by $\Mall$ and $\Mbh$ with probability tending to one. Nevertheless, $c_1E_{1,n}/(c_1E_{1,n}+c_2E_{2,n})\to0$. In other words, with a finite budget $B$, $A_{1,n}\to0$ almost surely.
\end{example}

In other words, strong evidence need not make participation in the contract profitable, as the total allocation can be dominated by an agent with even stronger evidence. There are several reasons we might like to avoid this scenario. For one, the principal may wish to avoid putting all its eggs in one basket if it can avoid it. In Example~\ref{ex:pharma}, for instance, the pharmaceutical company may wish to fund multiple startups in the interest of diversifying its risk. Second, it may be rational for each non-null agent to opt out of the contract, out of fear that another agent has a larger e-value.

We can address this problem by truncating each agent's e-value. The details are given in Section~\ref{sec:cap}. They apply only to two of the three explicit mechanisms for reasons which will be discussed below. Section~\ref{sec:lower_floor_opt} discusses how the principal can optimize over the polytopes $\P_\Delta$ or $\Q_\Delta$ while maintaining a floor on each agent's allocation.

\subsection{Participation under truncated evidence}
\label{sec:cap}

For a truncation level $L>0$, consider replacing each $E_i$ by $\bar E_i=\min\{E_i,L\}$ before running $\Mall$ or $\Mbh$. Since $\bar E_i\leq E_i$, the resulting  allocations still belong to $\Q_\Delta(\bar E)\subseteq\Q_\Delta(E)$, still satisfy \ref{eq:ic}, and still bound post-hoc risk at $C_N/C$. While we are permitted to choose a measurable truncation level $L=L(E,\Delta)$ based on the data, the analysis is cleaner if we consider deterministic truncation levels.

Our first result gives a sufficient condition for an agent to profit, in expectation, from the contract. The proof is in Appendix~\ref{app:proofs}.

\begin{proposition}
\label{prop:finite-sample-participation}
Fix a deterministic truncation level $L>0$ and a non-null agent $i$ with $c_iL\geq C/\Delta$. Write $\pi_i=\mathbb{P}_H(E_i\geq L)$, where $H$ is any joint law over the agents' data. Running $\Mall$ or $\Mbh$ with truncated e-values $\bar E_i$  gives allocations satisfying
\begin{equation}
\label{eq:finite-sample-payment-floor}
    \E_H[A_i]\geq\pi_i c_i\min\{L,B/C\}.
\end{equation}
Consequently, a sufficient condition for strictly profitable entry is
\begin{equation}
\label{eq:finite-sample-participation}
    \pi_i>\max\{1/L,C/B\}.
\end{equation}
\end{proposition}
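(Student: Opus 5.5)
The plan is to prove a pointwise bound on the event $\{E_i\geq L\}$ and then take expectations. Fix any realization of the data with $E_i\geq L$, so that $\bar E_i=L$ and agent $i$'s truncated potential allocation is $\bar F_i=c_iL\geq C/\Delta$. On this event I will show that both $\Mall$ and $\Mbh$ activate, that agent $i$ is funded, and that $A_i\geq c_i\min\{L,B/C\}$. Since $A_i\geq 0$ always, this gives $A_i\geq \ind{E_i\geq L}\,c_i\min\{L,B/C\}$ pointwise. Taking $\E_H$ then yields~\eqref{eq:finite-sample-payment-floor} for any joint law $H$, with no independence assumption needed.

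\emph{Activation and selection.} For $\Mall$, the threshold condition holds because $\sum_j\bar F_j\geq\bar F_i\geq C/\Delta$. Every agent is therefore assigned $\tilde A_j=\bar F_j$, and in particular $\tilde A_i=c_iL$. For $\Mbh$, let $r$ be the rank of $\bar F_i$ in the sorted order, taking the tie-breaking so that all agents with equal values precede the threshold index; ties do not affect $\tilde A$. Then $r\bar F_{(r)}\geq \bar F_i\geq C/\Delta$, so $k^*\geq r$ and agent $i$ is selected, again giving $\tilde A_i=c_iL$.

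\emph{Budget rescaling.} This is the one step with real content. If $W\leq B$, then $A_i=c_iL$ and we are done. Otherwise $A_i=B c_iL/W$, and I need an upper bound on $W$. Truncation is exactly what supplies it: $W\leq\sum_j c_j\bar E_j\leq L\sum_j c_j=LC$. Hence $A_i\geq Bc_iL/(LC)=c_iB/C$. Combining the two cases gives $A_i\geq c_i\min\{L,B/C\}$. This shows why truncation is essential: without it $W$ is unbounded, which is the failure mode of Example~\ref{ex:participation-dilution}. The case $B=\infty$ is automatic, since $\min\{L,B/C\}=L$ and no rescaling occurs.

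\emph{Participation condition.} Entry is strictly profitable when $\E_H[A_i]>c_i$. By~\eqref{eq:finite-sample-payment-floor}, it suffices that $\pi_i\min\{L,B/C\}>1$, which is equivalent to $\pi_i>1/\min\{L,B/C\}=\max\{1/L,C/B\}$. This is exactly~\eqref{eq:finite-sample-participation}.

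I expect no real obstacle. The only care needed is in handling ties in the $\Mbh$ ordering and in noting that the pointwise bound holds for every realization, so no dependence structure among the agents matters.
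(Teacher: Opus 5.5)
Your proposal is correct and matches the paper's proof essentially step for step: a pointwise bound on $\{E_i\geq L\}$, activation of AoN, selection under e-BH via agent $i$'s rank, the truncation bound $W\leq CL$ on the total raw allocation under proportional rescaling, and then expectations. Your remark that ties cannot change the e-BH allocation is also correct, since a tie at index $k^*$ would contradict its maximality; the paper simply allows any tie-breaking order.
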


Note that while an agent does not know the joint law $H$, $\pi_i$ depends only on the distribution of their own e-value. Thus, if $L$, $C$, and $B$ are public, an agent may  calculate whether~\eqref{eq:finite-sample-participation} holds. As usual, this result holds under arbitrary dependence between the agents. The condition $c_iL\geq C/\Delta$ in Proposition~\ref{prop:finite-sample-participation} is necessary. Indeed, suppose we fund an agent $i$ as soon as they reach the truncation level $L$. In the event that all other agents are null and have e-values of 0,~\eqref{eq:polytope-global-threshold} gives $c_i L \geq c_i\bar E_i\geq C/\Delta $.

Proposition~\ref{prop:finite-sample-participation} gives a lower bound on an agent's allocation. In the asymptotic regime, meanwhile, we can give a precise characterization of the expected allocation.

\begin{theorem}
\label{thm:capped-participation}
Fix $m$, $N$, $\Delta$, and a finite budget $B<\infty$, and suppose that $S=[m]\setminus N$ (the set of non-null agents) is nonempty. Let $H$ be any joint law over the agents' data under which $N$ is the set of null agents. Let $n$ be the number of observations in each agent's dataset. Suppose $E_{i,n}$ is a valid e-value for every $i$ and $n$, and let $L_n>0$ be deterministic with $L_n\to\infty$. Run $\Mall$ or $\Mbh$ on $\bar E_{i,n}=\min\{E_{i,n},L_n\}$. For any agent $i\in S$ satisfying
\begin{equation}
\label{eq:cap-reaching-power}
    \mathbb{P}_H(E_{i,n}\geq L_n)\longrightarrow1,
\end{equation}
the expected net payoff $U_{i,n}=\E_H[A_{i,n}]-c_i$ satisfies
\begin{equation}
\label{eq:capped-participation-lower-bound}
    \liminf_{n\to\infty}U_{i,n}
    \geq c_i\left(\frac{B}{C_S}-1\right).
\end{equation}
In particular, $B>C_S$ guarantees eventually strictly profitable entry for this agent. Moreover, if~\eqref{eq:cap-reaching-power} holds for every $i\in S$, then
\begin{equation}
\label{eq:capped-participation-limit}
    \lim_{n\to\infty}\E_H[A_{i,n}]
    =\begin{cases}
        Bc_i/C_S,&i\in S,\\
        0,&i\in N,
    \end{cases}
\end{equation}
and $\E_H[\nap_N(A_n)]\to0$.
\end{theorem}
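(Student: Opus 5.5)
We argue on the event $G_n=\{E_{j,n}\geq L_n \text{ for all } j\in S\}$ when all non-nulls reach the cap, and separately bound what nulls can take. Start with the single-agent bound. For a fixed $i\in S$ satisfying~\eqref{eq:cap-reaching-power}, let $G_{i,n}=\{E_{i,n}\geq L_n\}$. Since $L_n\to\infty$, eventually $c_iL_n\geq C/\Delta$. On $G_{i,n}$, $\bar F_i=c_iL_n$ alone clears the threshold~\eqref{eq:polytope-global-threshold}, so $\Mall$ activates. For $\Mbh$, the ordered index $k$ of agent $i$ satisfies $kF_{(k)}\geq c_iL_n\geq C/\Delta$, so $k^*\geq 1$ and every agent with $\bar F\geq c_iL_n$ is funded; in particular $i$ is. We do not use Proposition~\ref{prop:finite-sample-participation} directly, because its floor $c_i\min\{L,B/C\}$ gives $B/C$ rather than $B/C_S$. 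Instead we compute the rescaled allocation: once $\sum_j \tilde A_j>B$, which holds eventually on $G_{i,n}$ since $\tilde A_i=c_iL_n\to\infty$, we have
\[
A_{i,n}=\frac{Bc_iL_n}{\sum_{j}\tilde A_j}\geq \frac{Bc_iL_n}{\sum_{j\in S}c_jL_n+\sum_{j\in N}c_j\bar E_{j,n}}=\frac{Bc_i}{C_S+\sum_{j\in N}c_j\bar E_{j,n}/L_n}.
\]

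The key step, and the main obstacle, is showing that the null term $R_n=\sum_{j\in N}c_j\bar E_{j,n}/L_n$ vanishes. Since $\E_H[\bar E_{j,n}]\leq 1$, we get $\E_H[R_n]\leq C_N/L_n\to0$, so $R_n\to0$ in probability; this holds under arbitrary dependence. Then $A_{i,n}\ind{G_{i,n}}\geq Bc_i\ind{G_{i,n}}/(C_S+R_n)$. The right-hand side converges in probability to $Bc_i/C_S$ and is bounded by $Bc_i/C_S$, so bounded convergence gives $\liminf \E_H[A_{i,n}]\geq Bc_i/C_S$. Subtracting $c_i$ yields~\eqref{eq:capped-participation-lower-bound}.

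For the limit~\eqref{eq:capped-participation-limit}, assume~\eqref{eq:cap-reaching-power} for all $i\in S$, so $\mathbb{P}_H(G_n)\to1$ by a union bound over the finitely many agents. For $j\in N$ we have $A_{j,n}\leq B\,c_j\bar E_{j,n}/(C_SL_n)$ on $G_n$ eventually, because the denominator is at least $C_SL_n$. Taking expectations gives at most $Bc_j/(C_SL_n)\to0$, while off $G_n$ we have $A_{j,n}\leq B\,\ind{G_n^c}$, whose expectation also tends to $0$. So $\E_H[A_{j,n}]\to0$. Summing over $i\in S$, the total allocation equals $B$ on $G_n$ eventually, and $A\leq B$ always. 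The lower bounds $\liminf\E_H[A_{i,n}]\geq Bc_i/C_S$ sum to $B$, so each must be attained in the limit; this gives $\lim \E_H[A_{i,n}]=Bc_i/C_S$. Finally, on $G_n$ eventually, $\nap_N=\sum_{j\in N}A_{j,n}/B$, and this is at most $R_n/C_S$. Since $\nap\leq 1$, we get $\E_H[\nap_N]\leq \E_H[R_n]/C_S+\mathbb{P}_H(G_n^c)\to0$.
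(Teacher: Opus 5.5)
Your proof is correct. The $\liminf$ bound follows the paper's argument: the same event $G_{i,n}$, the same selection argument for $\Mall$ and $\Mbh$, the same bound on the raw total, and bounded convergence. The only cosmetic differences are that you use the truncated null term $R_n$ instead of the untruncated $Z_n$, and that you resolve the $\min\{L_n,B/(C_S+Z_n)\}$ once $c_iL_n>B$.

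For the limit~\eqref{eq:capped-participation-limit} and the NAP claim, you take a genuinely different route.

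\emph{The paper's route.} It shows convergence in probability of the normalized raw allocations, $\tilde A_{i,n}/L_n\to c_i\ind{i\in S}$ and $W_n/L_n\to C_S$. From this it deduces that each $A_{i,n}$ converges in probability to $Bc_i/C_S$ (resp.\ $0$), and then applies bounded convergence.

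\emph{Your route.} You work directly with expectations, in two steps.
\begin{itemize}
\item For nulls you give an explicit bound, $\E_H[A_{j,n}]\le Bc_j/(C_SL_n)+B\,\mathbb{P}_H(G_n^c)$, valid for large $n$.
\item For non-nulls you use a squeeze. The budget gives $\sum_{i\in S}\E_H[A_{i,n}]\le B$, while the already-proved lower bounds $\liminf_n\E_H[A_{i,n}]\ge Bc_i/C_S$ sum to $B$. Together these force every limit.
\end{itemize}

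\emph{What each buys.} Your version only needs the one-sided bound $W_n\ge C_SL_n$ on $G_n$, not the two-sided limit $W_n/L_n\to C_S$. It also gives non-asymptotic rates, for example $\E_H[\nap_N(A_n)]\le C_N/(C_SL_n)+\mathbb{P}_H(G_n^c)$ for large $n$. The paper's version directly states that the realized allocations concentrate. That also follows from your argument in one line: $A_{i,n}$ dominates $Bc_i\ind{G_{i,n}}/(C_S+R_n)$, which converges in probability to the same limit, and the expectations match, so the nonnegative gap tends to $0$ in $L^1$.

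One small slip: from $k\bar F_{(k)}\ge C/\Delta$, the conclusion you need is $k^*\ge k$, by maximality of $k^*$. Merely $k^*\ge1$ is not enough; it is $k^*\ge k$ that actually places agent $i$ in the funded set.
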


The proof is in Appendix~\ref{app:proofs}. It's useful to observe that condition~\eqref{eq:cap-reaching-power} holds for agent $i$ if $n^{-1}\log E_{i,n}$ converges in $H$-probability to a positive constant and $L_n$ satisfies $\log L_n=o(n)$. Returning to Example~\ref{ex:participation-dilution}, if we truncate the agents' e-values appropriately, then by~\eqref{eq:capped-participation-limit}, agent 1 receives expected allocation $Bc_1/(c_1+c_2)$ in the limit despite agent 2's e-value growing at a faster exponential rate.

\begin{remark}
\label{rem:allocation-and-cost}
As is made clear by~\eqref{eq:capped-participation-limit} but also apparent in $\Mcum$, $\Mall$, and $\Mbh$, allocations scale with the entry cost $c_i$. In other words, two agents with identical e-values can receive substantially different payoffs if their participation costs differ. We might consider this only fair: a larger upfront investment results in a potentially larger payoff.

On the other hand, the principal might prefer that allocations reflect evidence alone. In this case, one could choose weights $0<\kappa_i\leq c_i$, replacing $c_i$ by $\kappa_i$ in the coordinate caps and the cost-weighted e-collection, and $C_S$ by $K_S=\sum_{i\in S}\kappa_i$. Equal weights make allocations depend on evidence alone. Since $A_i\leq\kappa_iE_i\leq c_iE_i$, null deterrence is preserved, and the post-hoc risk bound of Proposition~\ref{prop:compact-safe-allocation-polytope} becomes $K_N/K_{[m]}$. Payments then scale with $\kappa_i$ while entry still costs $c_i$, however, which changes the participation and payment guarantees.
\end{remark}

Proposition~\ref{prop:finite-sample-participation} and Theorem~\ref{thm:capped-participation} hold only for $\Mall$ and $\Mbh$. The mechanism $\Mcum$ need not obey their conclusions. For instance, if two agents both reach the truncation level (i.e., $E_i \geq L$) and $c_1>c_2$, then $\Mcum$ funds only agent 1 once $c_1L\geq C/\Delta$. In this case, agent 2 receives zero despite also reaching the truncation level.

Appendix~\ref{app:truncation-necessity} shows that there is a formal sense in which truncating the e-value is necessary to ensure that a non-null agent's payout is not diluted. In particular, if we do not want the limiting allocation to depend on the agent's rate of growth, then we must truncate the e-value. However, if we are prepared to allow those growth rates into the final allocation, then we can consider sublinear transformations of the e-values. The following remark gives an example.

\begin{remark}
\label{rem:log-thresholding}
As alluded to above, instead of truncating the e-value consider replacing $E_i$
by $\widetilde E_i=\log(1+E_i)$. Since $0\leq\widetilde E_i\leq E_i$, this also preserves null deterrence and the post-hoc risk bound. Fix $m$, $N$, $\Delta$, $B$, and $H$ as in Theorem~\ref{thm:capped-participation}, and suppose $n^{-1}\log E_{i,n}\to r_i\in(0,\infty)$ in $H$-probability for every $i\in S$. Running $\Mall$ or $\Mbh$ on the transformed evidence gives
\begin{equation}
    \lim_{n\to\infty}\E_H[A_{i,n}]
    =\frac{Bc_i r_i}{\sum_{j\in S}c_jr_j}\quad \text{ for all }i\in S.
\end{equation}
Indeed, $\widetilde E_{i,n}/n\to r_i$ for non-null agents, while $\widetilde E_{i,n}/n\to0$ for null agents. Thus every non-null agent is selected with probability tending to one. Consequently, logarithms prevent the vanishing share in Example~\ref{ex:participation-dilution}, but they cannot escape the dependence on the limiting growth rates.
Returning to Example~\ref{ex:participation-dilution}, agent 1's limiting payment in this scenario is $Bc_1/(c_1+3c_2)$, reflecting that $n^{-1} \log E_{1,n}\to 1/2$ and $n^{-1}\log E_{2,n} \to 3/2$.
\end{remark}



We next ask how the principal can optimize its own objective (over the safe allocation region $\P_\Delta(E)$) while preserving incentives for sufficiently powerful non-null agents to participate.

\subsection{Optimizing with payment floors}
\label{sec:lower_floor_opt}

For e-values $E=(E_1,\dots,E_m)$ and a level $\Delta>0$, let $b(E)=(b_1(E),\dots,b_m(E))$ be the allocations produced by $\Mall$ with truncated e-values. We consider the allocation of $\Mall$ the ``baseline payment.''
To protect a fraction $\lambda\in[0,1]$ of this baseline while optimizing the principal's objective, consider
\begin{equation}
\label{eq:payment-floor-optimization}
    \max_{A\in\P_\Delta(E)}V(A;E)\text{ subject to }A_i\geq\lambda b_i(E)\text{ for every }i\in[m].
\end{equation}

The following result bounds the loss in the principal's objective from imposing these payment floors.

\begin{proposition}
\label{prop:payment-floors}
If $V(\cdot;E)$ is continuous and concave, let $A^*$ maximize $V$ over $\P_\Delta(E)$ and let $A^\lambda$ solve~\eqref{eq:payment-floor-optimization}. For every $E$ and $\lambda\in[0,1]$,
\begin{equation}
\label{eq:payment-floor-objective-bound}
    V(A^*;E)-V(A^\lambda;E)
    \leq\lambda\bigl[V(A^*;E)-V(b(E);E)\bigr].
\end{equation}
\end{proposition}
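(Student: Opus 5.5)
The plan is to exhibit one feasible point for the floor-constrained problem~\eqref{eq:payment-floor-optimization} and then invoke concavity. The natural candidate is the convex combination
\[
    \tilde A=(1-\lambda)A^*+\lambda b(E).
\]
Since $A^\lambda$ maximizes $V$ over the feasible set of~\eqref{eq:payment-floor-optimization}, it suffices to show that $\tilde A$ lies in that set. The bound then follows from concavity:
\[
    V(A^\lambda;E)\geq V(\tilde A;E)\geq(1-\lambda)V(A^*;E)+\lambda V(b(E);E).
\]
Rearranging this inequality gives exactly~\eqref{eq:payment-floor-objective-bound}.

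The key steps, in order, are as follows.

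\begin{enumerate}
\item Show that $b(E)\in\P_\Delta(E)$. By Proposition~\ref{prop:mechanisms-in-polytope}, $\Mall_\Delta$ run on the truncated e-values $\bar E$ lands in $\Q_\Delta(\bar E)\subseteq\P_\Delta(\bar E)$. It then remains to check that $\P_\Delta(\bar E)\subseteq\P_\Delta(E)$, i.e.\ that the $\P$-polytope is monotone in $E$. This is immediate from the constraint form~\eqref{eq:pointwise-nap-certificate}: its right-hand side $\Delta\|A\|_1\sum_{i\in S}c_iE_i$ is nondecreasing in each $E_i$, as are the caps $A_i\leq c_iE_i$. The budget constraint does not involve $E$. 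This monotonicity is also the observation already made in Section~\ref{sec:cap}.
\item Note that $A^*\in\P_\Delta(E)$. Since $\P_\Delta(E)$ is convex (all its constraints are linear for fixed $E$ and $\Delta$), it follows that $\tilde A\in\P_\Delta(E)$.
\item Check the floors coordinatewise. Using $A^*_i\geq0$,
\[
    \tilde A_i=(1-\lambda)A^*_i+\lambda b_i(E)\geq\lambda b_i(E).
\]
Hence $\tilde A$ is feasible for~\eqref{eq:payment-floor-optimization}.
\item Conclude with the concavity inequality displayed above.
\end{enumerate}

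Two minor points need attention along the way. First, the maximizers $A^*$ and $A^\lambda$ must exist. Here $\P_\Delta(E)$ is a closed polytope, bounded by the caps $A_i\leq c_iE_i$ even when $B=\infty$, and $V$ is continuous, so maxima are attained. The floor-constrained set is nonempty because $b(E)$ belongs to it (this is the case $\lambda=1$ of $\tilde A$). Second, if $A^\lambda$ is only some solution rather than a specific one, the argument needs nothing beyond its optimality.

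The only step with real content, and the one I expect to require care, is step 1: $b(E)$ is computed from truncated evidence, so one must verify that membership in the polytope built from $\bar E$ transfers to the polytope built from $E$. As described in step 1, I would handle this by direct monotonicity of every defining inequality in $E$.
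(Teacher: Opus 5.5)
Your proof is correct and matches the paper's: both show that the convex combination $(1-\lambda)A^*+\lambda b(E)$ is feasible for the floor-constrained problem, then conclude by optimality of $A^\lambda$ and concavity of $V$. The only cosmetic difference is that the paper places the baseline via $\Q_\Delta(\bar E)\subseteq\Q_\Delta(E)\subseteq\P_\Delta(E)$, whereas you use monotonicity of $\P_\Delta(\cdot)$ in $E$ directly; both steps are immediate.
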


The proof is in Appendix~\ref{app:proofs}, and shows that $\lambda$ may be data-dependent. But to obtain a simple lower bound on expected net payoffs, let us suppose that $\lambda$ is fixed. Under the setup of Theorem~\ref{thm:capped-participation}, with fixed costs and a fixed finite budget $B$, using its truncated AoN allocations as the baselines gives, for any agent $i\in S$ satisfying~\eqref{eq:cap-reaching-power},
\begin{equation}
\label{eq:payment-floor-participation}
    \liminf_{n\to\infty}
        \big(\E_H[A^\lambda_{i,n}]-c_i\big)
    \geq c_i\left(\frac{\lambda B}{C_S}-1\right).
\end{equation}
Together, these bounds quantify the price of protecting participation. If $B>C_S$, choosing $\lambda>C_S/B$ guarantees eventual strict profitability for this agent, while sacrificing at most a fraction $\lambda$ of the gap between the baseline and the optimum.

\section{Extensions to Online Arrival and Two-Stage Contracts}
\label{sec:extensions}

We next consider two steps toward a fully sequential allocation theory. First, agents approach the principal over time and receive irrevocable allocations. Second, funded agents use their allocations to conduct another round of experimentation and return with fresh evidence. Selective entry, in which agents observe their data before deciding whether to approach the principal, is treated separately in Appendix~\ref{app:selective-entry}.

\subsection{Online multi-agent statistical contracts}
We begin with the online setting. We allow a given agent to enter the contract based on what has happened thus far, including which other agents have entered, the data they've generated, and the allocations that have been made. The principal must decide on-the-fly how much to allocate to each agent that enters---it cannot decide to allocate later, nor revoke any allocation that has been made previously. We also allow the principal's budget to increase at every timestep.

\begin{tcolorbox}[
    colback=blue!8!white,
    colframe=white,
    boxrule=0pt,
    sharp corners,
    left=6pt,
    right=6pt,
    top=6pt,
    bottom=6pt,
    breakable
]
\textbf{Online multi-agent statistical contract}
\begin{enumerate}
    \item The principal announces mechanism $\calM$, a sequence of publicly known participation costs $(c_t)_{t\geq1}$ with $c_t>0$, and initial budget $B_1\in(0,\infty]$.
    \item For $t=1,2,\dots$
    \begin{enumerate}
        \item An agent with private distribution $P$ decides whether to pay $c_t$ to enter the contract. The agent has access to all prior information, including which agents have entered and what they've been allocated.
        \item Upon entry, data $X_t$ is generated from $P$. The principal feeds $X_t$ into $\calM$ which determines an allocation $A_t$ for the agent.
        \item At all times $t$, the allocations must satisfy $\sum_{i\leq t} A_i\leq B_t$.
        \item The total budget gets updated to $B_{t+1}\geq B_t$.
    \end{enumerate}
\end{enumerate}
\end{tcolorbox}

The guarantees introduced in Section~\ref{sec:contract} need to be modified somewhat in the online setting. In particular, since agents enter strategically, the set $N\subseteq\mathbb{N}$ of times at which a null agent enters may be random. Null deterrence is therefore conditional on agent $t$ entering the contract, and becomes
\begin{equation}
    \E_N[(A_t - c_t)\ind{t\in N}] \leq 0,
\end{equation}
which collapses to~\ref{eq:ic} in the batch setting. This in turn requires that each agent's e-value is valid conditional on the past. That is, letting $\calF_t$ denote the information available after the first $t$ participating agents have been processed by the contract, we will require the e-values to satisfy
\begin{equation}
\label{eq:online-null-validity}
    \E\big[E_t\ind{t\in N}\mid\calF_{t-1}\big]\leq\mathbb{P}(t\in N\mid\calF_{t-1})\text{ for every }t\geq1.
\end{equation}
This holds, for instance, if the type of the $t$-th agent is fixed before $X_t$ is drawn and $\E_\calP[E_t\mid\calF_{t-1}]\leq1$. In this section, $\E_N$ denotes the supremum over all joint laws satisfying~\eqref{eq:online-null-validity}.

Next we propose a mechanism that controls the NAR simultaneously at every time, including at stopping times.

\begin{tcolorbox}[
    colback=pink,
    colframe=white,
    boxrule=0pt,
    sharp corners,
    left=6pt,
    right=6pt,
    top=6pt,
    bottom=6pt,
    breakable
]
\textbf{Online mechanism $\Monline_\Delta$}
\begin{enumerate}[nosep,topsep=0.1cm]
    \item Set $W_0 = 0$ and let $(\gamma_t)_{t=1}^\infty$ be a predictable nonnegative process: $\gamma_t$ is $\calF_{t-1}$-measurable and $\sum_{t=1}^\infty\gamma_t\leq1$ almost surely.
    \item For $t=1,2,\dots$ \# Here $t$ indicates the $t$-th agent to have opted into the contract
    \begin{enumerate}
        \item Define $\beta_t = 1 - \Delta\gamma_tW_{t-1}/c_t$. Set $\Delta_t=\Delta\gamma_t/\beta_t$ if $\beta_t>0$, and set $\Delta_t=\infty$ otherwise, with the conventions $1/0=\infty$ and $1/\infty=0$.
        \item Draw $X_t\sim P$ from the agent's distribution and compute a conditional e-value $E_t$ from $X_t$ satisfying~\eqref{eq:online-null-validity}.
        \item If $E_t \geq 1/\Delta_t$ and $W_{t-1} + c_tE_t \leq B_t$ then the agent gets allocated $A_t = c_tE_t$. Otherwise set $A_t=0$.
        \item $W_t \gets W_{t-1} + A_t$
    \end{enumerate}
\end{enumerate}
\end{tcolorbox}

The online mechanism draws inspiration from the r-LOND~\citep{javanmard2018online,zrnic2021asynchronous} and e-LOND~\citep{xu2024online} procedures for online multiple testing~\citep{foster2008alpha}. In particular, given a global desired error rate of $\Delta$, it defines a sequence of nominal levels $\Delta_t$, allocating the amount $c_tE_t$ if the e-value exceeds $1/\Delta_t$. As usual, additional care is required since we are not rejecting hypotheses but determining allocations.
The online mechanism gives the following guarantee, the proof of which is deferred to Appendix~\ref{app:proofs}.
For the stopping-time statement, define $\nap_\infty(\calM)=\limsup_{t\to\infty}\nap_t(\calM)$.

\begin{theorem}
\label{thm:online-mechanism}
    The online mechanism satisfies \ref{eq:ic} and the time-uniform guarantee
    \begin{equation}
    \label{eq:anytime-online-control}
        \E_N\left[\sup_{t\geq1}
        \frac{\nap_t(\Monline_\Delta)}{\Delta}\right]\leq1.
    \end{equation}
    Consequently, $\E_N[\nap_\tau(\Monline_\Delta)]\leq\Delta$ for every (potentially infinite) stopping time $\tau$.
\end{theorem}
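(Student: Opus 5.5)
The plan is to prove a pathwise domination: for every realization,
\[
\sup_{s\geq1}\frac{\nap_s(\Monline_\Delta)}{\Delta}\;\leq\;\sum_{t\geq1}\gamma_t E_t\ind{t\in N}.
\]
Once this holds, a tower-property argument bounds the expectation of the right-hand side by $\E[\sum_t\gamma_t]\leq1$. This mirrors the e-LOND analysis of \citet{xu2024online}, with the denominator $|R|$ replaced by the cumulative allocation $W_s$.

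\textbf{Step 1 (funding forces a large cumulative total).} I will show that whenever agent $t$ is funded ($A_t>0$), we have $W_t\geq c_t/(\Delta\gamma_t)$. There are two cases.
\begin{itemize}
\item If $\beta_t>0$, funding requires $E_t\geq 1/\Delta_t=\beta_t/(\Delta\gamma_t)=1/(\Delta\gamma_t)-W_{t-1}/c_t$. Multiplying by $c_t$ gives $W_t=W_{t-1}+c_tE_t\geq c_t/(\Delta\gamma_t)$.
\item If $\beta_t\leq0$, then by definition $W_{t-1}\geq c_t/(\Delta\gamma_t)$, and this persists since $W_t\geq W_{t-1}$.
\end{itemize}
The degenerate case $\gamma_t=0$ needs a check. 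Then $\beta_t=1$ and $\Delta_t=0$, so the threshold is $\infty$ and agent $t$ is funded only if $E_t=\infty$. That is a null event for null agents, and in any case the term $\gamma_tE_t$ is handled by the convention. Since $W$ is nondecreasing, we get $W_s\geq c_t/(\Delta\gamma_t)$ for every $s\geq t$ with $A_t>0$.

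\textbf{Step 2 (pathwise bound).} For any $s$ with $W_s>0$, apply Step 1 to each funded null term:
\[
\frac{\nap_s}{\Delta}=\sum_{t\leq s,\,t\in N}\frac{A_t}{\Delta W_s}\leq\sum_{t\leq s,\,t\in N,\,A_t>0}\frac{c_tE_t\,\Delta\gamma_t}{\Delta c_t}\leq\sum_{t\geq1}\gamma_tE_t\ind{t\in N}.
\]
If $W_s=0$, then $\nap_s=0$ by convention. The right-hand side does not depend on $s$, so it bounds the supremum over $s$.

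\textbf{Step 3 (expectation).} Since $\gamma_t$ is $\calF_{t-1}$-measurable, conditioning and using~\eqref{eq:online-null-validity} gives
\[
\E[\gamma_tE_t\ind{t\in N}]\leq\E[\gamma_t\mathbb P(t\in N\mid\calF_{t-1})]\leq\E[\gamma_t].
\]
Monotone convergence then gives $\E[\sum_t\gamma_tE_t\ind{t\in N}]\leq\E[\sum_t\gamma_t]\leq1$. This holds uniformly over admissible laws, hence under $\E_N$, which proves~\eqref{eq:anytime-online-control}.

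\textbf{Null deterrence.} Since $A_t\leq c_tE_t$ and $c_t$ is public and deterministic, the same conditioning yields
\[
\E[A_t\ind{t\in N}]\leq c_t\,\E[\mathbb P(t\in N\mid\calF_{t-1})]=c_t\,\mathbb P(t\in N),
\]
which is the online form of~\ref{eq:ic}.

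\textbf{Stopping times.} For a finite stopping time $\tau$, $\nap_\tau\leq\sup_t\nap_t$. For $\tau=\infty$, $\nap_\infty=\limsup_t\nap_t\leq\sup_t\nap_t$. Either way, $\E_N[\nap_\tau]\leq\Delta$.

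\textbf{Main obstacle.} The delicate part is Step 1: verifying that the choice of $\beta_t$ exactly absorbs $W_{t-1}$, and handling the edge cases ($\beta_t\leq0$, $\gamma_t=0$, infinite e-values) under the stated conventions. The budget check in the mechanism only ever sets $A_t$ to zero, so it cannot break the bound, since unfunded terms are simply dropped.
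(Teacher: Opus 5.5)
Your proof is correct and matches the paper's argument step for step. It uses the same case split on $\beta_t$ to show that funding agent $t$ forces $W_t\geq c_t/(\Delta\gamma_t)$, the same pathwise domination $\sup_s \nap_s/\Delta\leq\sum_t\gamma_tE_t\ind{t\in N}$, the same tower-property/Tonelli bound by $\E[\sum_t\gamma_t]\leq1$, and identical null-deterrence and stopping-time steps. The only cosmetic difference is that the paper takes $E_t$ to be finite-valued, so it disposes of the $\gamma_t=0$ case simply via $1/\Delta_t=\infty>E_t$ rather than through a null-event or convention argument.
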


Theorem~\ref{thm:online-mechanism} provides anytime validity for a fixed, prespecified $\Delta$. It should not be confused with post-hoc risk control over the choice of $\Delta$.

We do not comment on how to set $\gamma_t$ here, except to say that
allowing it  to be predictable turns (one facet of) the problem into a dynamic resource-allocation problem. Under a model for future arrivals, the principal may choose $\gamma_t$ based on the current financial budget, remaining risk budget $1-\sum_{s<t}\gamma_s$, and/or observed history, using dynamic programming or stochastic optimization.

\subsection{Two-stage contracts}
A natural extension of our framework allows agents to use their allocation to fund a second stage of experimentation. Concretely, after receiving allocation $A_i$ based on e-value $E_i^{(1)}$, suppose agent $i$ collects additional data which produces a second e-value $E_i^{(2)}$ conditional on what happened in the first round (the data and allocations), and the combined evidence is the product $E_i^{(1)} \cdot E_i^{(2)}$.
Let $\calF_1 = \sigma(X_1,\dots,X_m)$ be the $\sigma$-algebra generated by all first-round data. For every admissible joint law $H$ and each $i\in N$, assume $\E_H[E_i^{(2)}\mid\calF_1]\leq1$. This conditional validity requirement allows the second-stage sample size to depend on the $\calF_1$-measurable allocation $A_i$. Since $E_i^{(1)}$ is $\calF_1$-measurable, the tower property under $H$ gives
\[
\E_H[E_i^{(1)} \cdot E_i^{(2)}]=\E_H[E_i^{(1)} \cdot \E_H[E_i^{(2)}\mid\calF_1]]\leq\E_H[E_i^{(1)}]\leq1.
\]
Taking the supremum over admissible laws $H$ gives $\E_N[E_i^{(1)} \cdot E_i^{(2)}]\leq1$.
The combined e-values can then be fed into any of our batch mechanisms for a second round of allocation, with risk control guarantees intact, provided the second round is run over the original set of agents (all $m$ of them) with total cost $C$.

Agents who do not continue to the second stage can be assigned $E_i^{(2)}=0$, which trivially satisfies the conditional validity requirement. (Running the second round only over the agents funded in the first round can instead invalidate the guarantees. More specifically, this set of agents would be selected using the first-round e-values, which are therefore biased upwards.) Alternatively, a principal who wishes to restrict the second round to funded agents can instead use the e-values $E_i^{(2)}$ alone, which remain conditionally valid.

\section{Conclusion}

We have studied the problem of resource allocation between a principal and multiple competing agents, each with private information. Our framework is based on \emph{statistical contracts}: protocols which govern the interaction between the principal and the agents. We have focused on questions of how to design such contracts to achieve various objectives, such as disincentivizing ``ineffective'' agents from entering the contract, and bounding the fraction paid out to such agents.

Overall, we believe this to be a fruitful lens through which to view the problem of strategic resource allocation, blending features of statistical inference, multiple testing, game theory, and contract theory.

We hope to have shown that statistical contract theory, while in its infancy, can provide guiding principles on how to construct useful coordination mechanisms between agents with different information and incentives. Broadly speaking, we see this line of research as providing solutions to emerging problems at the intersection of statistics and economics, with data and data analysis playing the role of shared resources and trading opportunities in networked collectives.

\subsection*{Acknowledgements}
BC and AR acknowledge support from NSF grants IIS-2229881 and DMS-2310718. MJ acknowledges support from the Chair ``Markets and Learning,'' supported by Air Liquide, BNP PARIBAS ASSET MANAGEMENT Europe, EDF, Orange and SNCF, sponsors of the Inria Foundation, and from the European Union (ERC-2022-SYG-OCEAN-101071601).
Views and opinions expressed are however those of the author(s) only and do not
necessarily reflect those of the European Union or the European Research Council
Executive Agency. Neither the European Union nor the granting authority can be
held responsible for them.

{
\iflong\small\fi
\bibliographystyle{plainnat}
\bibliography{main}
}

\appendix

\section{Omitted Proofs}
\label{app:proofs}

\begin{proof}[Proof of Proposition~\ref{prop:polytope-separation}]
Fix a candidate allocation $\bar A$ that satisfies the coordinate-wise and budget constraints (i.e., lies in $\calX(\calE^\w)$) and write $\bar W=\|\bar A\|_1$, $F_i=c_iE_i$, and $q=\Delta\bar W$. For $S\subseteq[m]$, write $\bar A(S)=\sum_{i\in S}\bar A_i$ and $F(S)=\sum_{i\in S}F_i$, and define its constraint violation by
\[
    \Phi(S)=C_S\bar A(S)-qF(S).
\]

The subset constraints hold if and only if $\max_S\Phi(S)\leq0$, and including the empty set does not change the separation problem because $\Phi(\emptyset)=0$. With $x_i=\ind{i\in S}$, expansion of the product $C_S\bar A(S)$ gives
\begin{equation}
\label{eq:separation-quadratic}
    \Phi(x)
    =\sum_{i=1}^m(c_i\bar A_i-qF_i)x_i
    +\sum_{i<j}(c_i\bar A_j+c_j\bar A_i)x_ix_j.
\end{equation}
All the pairwise coefficients $c_i \bar A_j + c_j \bar A_i$ are nonnegative, so $\Phi$ is what's known as a
quadratic supermodular set function. Equivalently, $-\Phi$ is quadratic submodular and can be minimized by an $s$--$t$ minimum cut. This is a reasonably well-known optimization fact (see, e.g., \citet{hammer1965some,kolmogorov2002energy}), but let us give an overview of the reduction for completeness.  Set
\[
    d_i=qF_i-c_i\bar A_i,
    \qquad
    w_{ij}=c_i\bar A_j+c_j\bar A_i\geq0.
\]
Then write
\[
    -\Phi(x)=\sum_i d_ix_i-\sum_{i<j}w_{ij}x_ix_j=\sum_i\left(d_i-\frac12\sum_{j\neq i}w_{ij}\right)x_i+\frac12\sum_{i<j}w_{ij}|x_i-x_j|.
\]
We claim that $\Phi$ represents the capacities of a particular $s$--$t$ graph (for more background on $s$--$t$ graphs see \citet[Chapter~6]{ahuja1993network}).
To construct the graph, let $h_i=d_i-\frac12\sum_{j\neq i}w_{ij}$, take nodes $\{s,t,1,\dots,m\}$, and add arcs $i\to j$ and $j\to i$, each of capacity $w_{ij}/2$, for every $i<j$. Add arcs $s\to i$ of capacity $\max\{-h_i,0\}$ and $i\to t$ of capacity $\max\{h_i,0\}$. If $x_i=1$ indicates that node $i$ lies on the source side, the cut capacity is $-\Phi(x)+\sum_i\max\{-h_i,0\}$. A minimum cut therefore returns a set $S^*$ maximizing $\Phi(S)$. If $\Phi(S^*)>0$, the oracle returns the violated valid inequality
\[
    C_{S^*}\sum_{i\in S^*}A_i
    -\Delta F(S^*)\sum_{i=1}^m A_i\leq0.
\]
Otherwise all subset constraints hold. Checking the coordinate-wise and budget constraints takes $O(m)$ operations. The graph has $m+2$ vertices and $O(m^2)$ edges and can be constructed in $O(m^2)$ operations. The so-called ``relabel-to-front maximum-flow'' algorithm~\citep{goldberg1988new} computes a minimum cut in $O(m^3)$ arithmetic operations. For rational input data, the capacities and separating inequalities have polynomial bit length. The final claim follows from the polynomial-time equivalence of separation and linear optimization over rational polytopes~\citep{grotschel1988geometric}.
\end{proof}

\begin{proof}[Proof of Proposition~\ref{prop:compact-safe-allocation-polytope}]
Summing the constraint $CA_i \leq \Delta c_i E_i \|A\|_1$ over $i\in S$ gives \[C\sum_{i\in S}A_i\leq\Delta \|A\|_1 \sum_{i\in S}c_iE_i.\] Since $C_S\leq C$, this implies the corresponding subset constraint in the definition of $\P_\Delta$, thus proving the inclusion. Taking $S=N$ gives
\begin{equation}
\label{eq:compact-polytope-risk-bound}
    \frac{\nap_N(A)}{\Delta}
    \leq \frac{1}{C}\sum_{i\in N}c_i E_i,
\end{equation}
whenever $\|A\|_1>0$ (when $\|A\|_1=0$, the NAP is zero). Taking the supremum over $\Delta$ and then expectations proves the post-hoc risk bound. Null deterrence follows from $A_i\leq c_iE_i$ as in Theorem~\ref{thm:continuous-closure}.
\end{proof}

\begin{proof}[Proof of Proposition~\ref{prop:mechanisms-in-polytope}]
For $\Mcum_\Delta$, activation gives $\|\tilde A\|_1=C/\Delta$ and $0\leq\tilde A_i\leq F_i$. Consequently,
\[
    C\tilde A_i\leq CF_i
    =\Delta F_i\|\tilde A\|_1
    \qquad\text{for every }i,
\]
so $\tilde A\in\Q_\Delta(E)$ before imposing a finite budget.

For $\Mall_\Delta$, a nonzero raw allocation is made only when its total allocation $W$ satisfies $W\geq C/\Delta$. For every funded agent, $\tilde A_i=F_i$, and hence
\[
    C\tilde A_i=CF_i\leq\Delta F_iW.
\]
The inequality is immediate for unfunded agents. Thus its raw allocation satisfies the defining safety constraints of $\Q_\Delta(E)$.

Finally, for $\Mbh_\Delta$, if $k^*>0$, the total raw allocation satisfies
\[
    W=\sum_{i=1}^{k^*}F_{(i)}
    \geq k^*F_{(k^*)}\geq C/\Delta,
\]
and the same argument applies. The zero allocation also belongs to $\Q_\Delta(E)$, and common-factor budget rescaling preserves all of its constraints. Proposition~\ref{prop:compact-safe-allocation-polytope} completes the inclusion claim.
\end{proof}

\begin{proof}[Proof of Proposition~\ref{prop:implicit-objectives}]
Let us deal with each mechanism in turn.

\emph{All-or-nothing.} If~\eqref{eq:polytope-global-threshold} fails, then $\P_\Delta(E)=\Q_\Delta(E)=\{0\}$ by Propositions~\ref{prop:polytope-threshold} and~\ref{prop:compact-safe-allocation-polytope}, and $\Mall$ allocates zero, so there is nothing to prove. Otherwise $F\neq0$, and under budget rescaling the allocation is $\hat A=t^*F$ with $t^*=\min\{1,B/\|F\|_1\}$, which lies in $\Q_\Delta(E)\subseteq\P_\Delta(E)$ by Proposition~\ref{prop:mechanisms-in-polytope}, and $V_{\min}(\hat A)=t^*$. Let $A\in\P_\Delta(E)$ and $v=V_{\min}(A)$. The coordinate caps give $v\leq1$, and also $A_i=0$ whenever $F_i=0$. Since $A_i\geq vF_i$ for every $i$, $B\geq \|A\|_1\geq v\|F\|_1$, so $v\leq B/\|F\|_1$. Hence $v\leq t^*$, and $\hat A$ is a maximizer. For uniqueness, suppose $v=t^*$. If $t^*=1$, then $F\leq A\leq F$. If $t^*=B/\|F\|_1<1$, then $B\geq \|A\|_1\geq t^*\|F\|_1=B$, so every inequality $A_i\geq t^*F_i$ holds with equality. In both cases $A=\hat A$. The same argument applies verbatim over $\Q_\Delta(E)$. When $B=\infty$, $\hat A=F$ and every feasible $A$ satisfies $A\leq F$ coordinate-wise, which proves the final claim.

\emph{Cumulative evidence.} If $\Mcum_\Delta$ does not activate, then~\eqref{eq:polytope-global-threshold} fails and $\Q_\Delta(E)=\{0\}$. Otherwise, let $A\in\Q_\Delta(E)$ be nonzero with support $T$. Summing the constraints $CA_i\leq\Delta F_i\|A\|_1$ over $i\in T$ gives $C\,\|A\|_1\leq\Delta F(T)\|A\|_1$, so $F(T)\geq C/\Delta$. Since the largest $k$ potential allocations have the largest sum among all sets of size $k$, $|T|\geq k^{\rm cum}$. The raw allocation of $\Mcum$ has support of size exactly $k^{\rm cum}$, proportional rescaling multiplies it by a positive constant, and the result lies in $\Q_\Delta(E)$ by Proposition~\ref{prop:mechanisms-in-polytope}.

The activation condition is exactly $\sum_iF_i\geq C/\Delta$, which is also the all-or-nothing condition. Finally, if e-BH activates with index $k^*$, then
\[
    \sum_{j=1}^{k^*}F_{(j)}
    \geq k^*F_{(k^*)}\geq C/\Delta,
\]
so $k^{\rm cum}\leq k^*$. For example, with two agents whose potential allocations are $60$ and $40$ and with $C/\Delta=100$, the cumulative mechanism activates but e-BH does not.

\emph{e-BH.} If $k^*=0$, there is no nonempty self-consistent set: such a $T$ with $|T|=k$ would give $kF_{(k)}\geq k\min_{i\in T}F_i\geq C/\Delta$. Otherwise, let $T^*$ index $F_{(1)},\dots,F_{(k^*)}$. It is self-consistent because $F_{(k^*)}\geq C/(\Delta k^*)$. If $k^*=m$, then $T^*=[m]$ and the claims are immediate. Hence suppose $k^*<m$. Maximality of $k^*$ gives $(k^*+1)F_{(k^*+1)}<C/\Delta$, so $F_{(k^*+1)}<C/(\Delta k^*)$. Now let $T$ be self-consistent with $|T|=k$. As above, $kF_{(k)}\geq C/\Delta$, so $k\leq k^*$, and every $i\in T$ has $F_i\geq C/(\Delta k)\geq C/(\Delta k^*)>F_{(k^*+1)}$, so $i\in T^*$. Hence $F\mathbf{1}_T\leq F\mathbf{1}_{T^*}$ coordinate-wise, which proves the final claim.
\end{proof}

\begin{proof}[Proof of Theorem~\ref{thm:capped-participation}]
All convergence statements in this proof are under $H$. Write $\bar F_{j,n}=c_j\bar E_{j,n}$ and $W_n=\sum_j\tilde A_{j,n}$. Fix an agent $i\in S$ satisfying~\eqref{eq:cap-reaching-power}, and define
\[
    G_{i,n}=\{E_{i,n}\geq L_n\},\text{ and }Z_n=\frac{1}{L_n}\sum_{j\in N}c_jE_{j,n}.
\]
Null validity gives $\E_H[Z_n]\leq C_N/L_n\to0$, so $Z_n\to0$ in probability. For all sufficiently large $n$, $c_iL_n\geq C/\Delta$. On $G_{i,n}$, agent $i$ then alone triggers AoN. For e-BH, if this agent has rank $r$, then $r\bar F_{(r),n}=rc_iL_n\geq C/\Delta$, so $k^*\geq r$ and the agent is selected. Thus, under either policy, $\tilde A_{i,n}=c_iL_n$ on $G_{i,n}$. Regardless of the other non-null agents' e-values,
\[
    W_n\leq\sum_j\bar F_{j,n}
    \leq C_SL_n+\sum_{j\in N}c_jE_{j,n}
    =L_n(C_S+Z_n),
\]
and budget rescaling gives
\[
    A_{i,n}\geq\ind{G_{i,n}}
        c_i\min\{L_n,B/(C_S+Z_n)\}.
\]
The right-hand side is bounded by $Bc_i/C_S$ and converges in probability to that constant, since $\mathbb{P}_H(G_{i,n})\to1$. Its expectation therefore converges to $Bc_i/C_S$, proving $\liminf_n\E_H[A_{i,n}]\geq Bc_i/C_S$. Subtracting $c_i$ proves~\eqref{eq:capped-participation-lower-bound} (note that there are no independence assumptions).

Now suppose~\eqref{eq:cap-reaching-power} holds for every $i\in S$. Then $\bar E_{i,n}/L_n\to1$ for $i\in S$, while $\bar F_{i,n}/L_n\leq Z_n\to0$ for $i\in N$. Thus
\begin{equation}
\label{eq:capped-potential-limit}
    \frac{\bar F_{i,n}}{L_n}
    \longrightarrow c_i\ind{i\in S}
    \text{ in probability.}
\end{equation}

By the preceding selection argument, all non-null agents are selected with probability tending to one. Any selected null agent's raw allocation is bounded by $\bar F_{i,n}$. Consequently, for either mechanism,
\[
    \frac{\tilde A_{i,n}}{L_n}\longrightarrow c_i\ind{i\in S},\text{ and }\frac{W_n}{L_n}\longrightarrow C_S,
\]
in probability. In particular, $W_n\to\infty$ in probability. The budget binds with probability tending to one and $A_{i,n}=B\tilde A_{i,n}/W_n$ on that event. Thus $A_{i,n}$ converges in probability to $Bc_i/C_S$ for $i\in S$ and to zero for $i\in N$. Since $0\leq A_{i,n}\leq B$, the expectations converge to the same limits.

The total allocation converges in probability to $B>0$, while the allocation to null agents converges to zero. Thus $\nap_N(A_n)\to0$ in probability, and its boundedness by one gives $\E_H[\nap_N(A_n)]\to0$.
\end{proof}

\begin{proof}[Proof of Proposition~\ref{prop:finite-sample-participation}]
Write $\bar F_j=c_j\min\{E_j,L\}$. On $\{E_i\geq L\}$, agent $i$ has potential allocation $c_iL\geq C/\Delta$, which alone triggers AoN. For e-BH, let $r$ be agent $i$'s rank in any ordering used to break ties. Then $\bar F_{(r)}=c_iL$, so $r\bar F_{(r)}\geq C/\Delta$. Hence $k^*\geq r$ and agent $i$ is selected. Under either policy, the raw allocation to this agent is $c_iL$ and the total raw allocation is at most $CL$. Proportional budget rescaling therefore gives
\[
    A_i\geq c_iL\min\{1,B/(CL)\}=c_i\min\{L,B/C\}.
\]
Allocations are nonnegative off this event. Taking expectations proves the payment bound, and comparing it to $c_i$ gives~\eqref{eq:finite-sample-participation}.
\end{proof}

\begin{proof}[Proof of Proposition~\ref{prop:payment-floors}]
The truncated baseline belongs to $\Q_\Delta(\bar E)\subseteq\Q_\Delta(E)\subseteq\P_\Delta(E)$, where $\bar E$ denotes the truncated evidence. Since $b_i(E)\geq\lambda b_i(E)$, it is feasible for the payment-floor problem. The additional constraints preserve membership in $\P_\Delta(E)$, so \ref{eq:ic} and post-hoc risk control follow from Theorem~\ref{thm:continuous-closure}.

The region $\P_\Delta(E)$ is compact and convex. The constrained region is also compact, and continuity of $V$ ensures that both maxima are attained. By convexity, the allocation
\[
    A^{\rm mix}=(1-\lambda)A^*+\lambda b(E)
\]
belongs to the chosen region and satisfies the payment floors because $A^*\geq0$. Optimality and concavity therefore give
\[
    V(A^\lambda;E)\geq V(A^{\rm mix};E)\geq(1-\lambda)V(A^*;E)+\lambda V(b(E);E).
\]
Rearranging proves~\eqref{eq:payment-floor-objective-bound}.
Finally, the bound~\eqref{eq:payment-floor-participation} follows by taking expectations in $A^\lambda_{i,n}\geq\lambda b_{i,n}$ and applying the individual lower bound of Theorem~\ref{thm:capped-participation}.
\end{proof}

\begin{proof}[Proof of Theorem~\ref{thm:online-mechanism}]
For null deterrence, each agent is allocated at most $c_iE_i$, hence~\eqref{eq:online-null-validity} and the tower property give $\E[A_i\ind{i\in N}]\leq c_i\E[\mathbb{P}(i\in N\mid\calF_{i-1})]=c_i\mathbb{P}(i\in N)$.

Next, for all $t$, $\sum_{j\in[t]} A_j = W_t$.
    For each $i$, let
    \[I_i = \ind{E_i \geq 1/\Delta_i}\ind{W_{i-1} + c_iE_i\leq B_i}\] be the indicator that agent $i$ satisfies the allocation rule.
    Write
    \[
        \nap_t(\Monline_\Delta)=\frac{\sum_{i\in N\cap[t]} A_i}{W_t}=\frac{\sum_{i\in N\cap[t]} A_i I_i}{W_t}\leq \sum_{i\in N\cap[t]}\frac{A_i I_i}{W_{i-1}+A_i},
    \]
    where the inequality follows from the definition of $W_t$ as the total allocation at time $t$.
    Now suppose $I_i=1$, so that $A_i = c_iE_i \geq c_i/\Delta_i$. This forces $\gamma_i>0$: if $\gamma_i=0$, then $\beta_i=1$ and $\Delta_i=0$, so $1/\Delta_i = \infty > E_i$. If $\beta_i>0$, then $\Delta_i=\Delta\gamma_i/\beta_i$ and
    \[W_{i-1} + A_i \geq W_{i-1} + c_i/\Delta_i = W_{i-1} + \frac{c_i\beta_i}{\Delta \gamma_i} = \frac{c_i}{\Delta\gamma_i}, \]
    by definition of $\beta_i$. If $\beta_i\leq0$, then $W_{i-1}\geq c_i/(\Delta\gamma_i)$, again by definition of $\beta_i$. In both cases, $W_{i-1}+A_i\geq c_i/(\Delta\gamma_i)>0$ whenever $I_i=1$. Therefore,
    \[
        \sum_{i\in N\cap[t]}\frac{A_i I_i}{W_{i-1}+A_i}\leq \Delta \sum_{i\in N\cap[t]}\gamma_i \frac{A_i}{c_i} I_i=\Delta \sum_{i\in N\cap[t]} \gamma_i E_i I_i,
    \]
    hence, simultaneously for every $t$,
    \begin{equation*}
        \frac{\nap_t(\Monline_\Delta)}{\Delta}
        \leq\sum_{i\in N\cap[t]}\gamma_iE_iI_i
        \leq\sum_{i\in N}\gamma_iE_iI_i.
    \end{equation*}
    Fix any joint law satisfying~\eqref{eq:online-null-validity} and write $\E$ for expectation under that law. Writing $\sum_{i\in N}\gamma_iE_iI_i=\sum_{i=1}^\infty\ind{i\in N}\gamma_iE_iI_i$, taking the supremum over $t$ and then expectations, Tonelli's theorem, predictability, and~\eqref{eq:online-null-validity} give
    \begin{align*}
        \E\left[\sup_{t\geq1}
        \frac{\nap_t(\Monline_\Delta)}{\Delta}\right]
        &\leq \sum_{i=1}^\infty\E[\ind{i\in N}\gamma_iE_iI_i]\leq \sum_{i=1}^\infty\E[\ind{i\in N}\gamma_iE_i]\\
        &=\sum_{i=1}^\infty\E\!\left[
            \gamma_i\E[E_i\ind{i\in N}\mid\calF_{i-1}]\right]\leq\sum_{i=1}^\infty\E[\gamma_i]
        =\E\!\left[\sum_{i=1}^\infty\gamma_i\right]
        \leq1.
    \end{align*}
    This proves~\eqref{eq:anytime-online-control}. For every potentially infinite stopping time $\tau$, $\nap_\tau/\Delta$ is bounded pathwise by the same supremum, using the definition of $\nap_\infty$ on $\{\tau=\infty\}$, proving the stopping-time claim.
\end{proof}

\section{Additional Results}

\subsection{Alternative scaling method for finite budgets}
\label{sec:finite-budget-methods}

The explicit mechanisms in Section~\ref{sec:nar-control} use a rescaling method which spends the entire budget whenever the total raw allocation exceeds $B$. Let us call this \emph{hard-cap} rescaling. For a finite budget $0<B<\infty$, an alternative is \emph{evidence-adaptive reserve rescaling}. Given raw allocations $\tilde A_i$ with total $W=\sum_i\tilde A_i$, set
\begin{equation}
\label{m2}
    A_i=\frac{B\tilde A_i}{B+W}.
\end{equation}
This can be interpreted as allocating the budget between the real agents $1,\dots,m$ and a dummy reserve agent with a maximum potential allocation of $B$. The total allocation to real agents and the unallocated reserve are, respectively,
\begin{equation}
\label{eq:evidence-adaptive-reserve}
    T(W)=\frac{BW}{B+W},\text{ and }R(W)=B-T(W)=\frac{B^2}{B+W}.
\end{equation}
Thus the fraction of the budget held in reserve decreases as the total raw allocation grows. Since $A_i\leq\tilde A_i$ and all coordinates are multiplied by the same positive factor, the reserve rule preserves null deterrence and leaves NAP unchanged. In particular, the three explicit policies retain their post-hoc NAP risk guarantees.

The reserve rule may be useful when false allocations are especially costly. To see the tradeoff, suppose the principal's utility is linear, $U(A_1^m)=\sum_i u(A_i,P_i)$, with nonpositive contributions from null agents and nonnegative contributions from non-null agents.
(This is the principal's ``true'' utility, distinct from the objective $V$ it is maximizing in Sections~\ref{sec:general-losses} and~\ref{sec:nap}.)
For $W>0$, the hard-cap and reserve rules multiply the raw utility by $\min\{1,B/W\}$ and $B/(B+W)$, respectively. Reserve rescaling therefore gives higher utility precisely when $U(\tilde A_1^m)<0$, and the hard-cap rule gives higher utility when it is positive. If $W=0$, both allocate zero. For coordinate-wise nondecreasing objectives, the hard-cap rule is always weakly preferable because it pays at least as much to every agent.

\subsection{Extension to approximate and asymptotic e-values}
\label{app:approximate-evalues}

E-values are known for a wide variety of settings; see the book by \citet{ramdas2025hypothesis} for an overview. Still, just as in the world of p-values, for some classes of distributions no  nontrivial e-value is known, and agents may instead rely on e-values that are valid only approximately or asymptotically~\citep{ignatiadis2024asymptotic,chugg2026posthoc,martinez2026bentkus}.
Here we investigate the extent to which our contracts can be made to handle such objects.
Overall, we find that many of the theorems degrade gracefully with the approximation error of the e-value, provided one uses the right notion of approximation.

Following \citet[Definition~3.1]{ignatiadis2024asymptotic}, for $\varepsilon\geq0$ and $\delta\in[0,1]$, we call a nonnegative random variable $E$ an \emph{$(\varepsilon,\delta)$-approximate e-value} for $\calP$ if
\begin{equation}
\label{eq:approximate-evalue}
    \E_P[E\wedge t]\leq1+\varepsilon+\delta t\text{ for all }t\geq0\text{ and }P\in\calP,
\end{equation}
where $a\wedge b=\min\{a,b\}$. Letting $t\to\infty$ shows that $E$ is $(\varepsilon,0)$-approximate if and only if $\E_P[E]\leq1+\varepsilon$ for all $P\in\calP$, i.e., if and only if $E/(1+\varepsilon)$ is an e-value. A positive $\delta$ instead allows the bound on the expectation to fail on an event of probability at most $\delta$~\citep[Proposition~3.2]{ignatiadis2024asymptotic}. A sequence $(E_n)$ is a \emph{uniformly asymptotic} e-value if each $E_n$ is $(\varepsilon_n,\delta_n)$-approximate with $\varepsilon_n,\delta_n\to0$, and it is \emph{uniformly strongly asymptotic} if, in addition, $\delta_n=0$ for all large $n$. The latter is equivalent to $\limsup_n\sup_{P\in\calP}\E_P[E_n]\leq1$~\citep[Proposition~3.6]{ignatiadis2024asymptotic}. \citet{ignatiadis2024asymptotic} also allow $\varepsilon$ and $\delta$ to depend on $P$, in which case convergence in distribution to an e-value under each $P\in\calP$ suffices for asymptotic validity (their Proposition~3.8).

Let us turn now to using these objects in our mechanisms.
Suppose each agent $i$ uses an $(\varepsilon_i,\delta_i)$-approximate e-value $E_i$, and write
\[
    \bar\varepsilon_N=\frac{1}{C_N}\sum_{i\in N}c_i\varepsilon_i,\text{ and }\delta_N=\sum_{i\in N}\delta_i.
\]
If the mechanisms are run on $E=(E_1,\dots,E_m)$ exactly as before, then we obtain the following guarantees. The upshot is that when using approximate e-values, null deterrence and risk bounds also hold approximately.

\begin{proposition}
\label{prop:approximate-evalues}
Let $B\in(0,\infty]$.
\begin{enumerate}[nosep,topsep=0.1cm]
    \item Suppose $\delta_i=0$ for every $i\in N$. Every measurable selection $A\in\calX(\calE^\w)$ satisfies $\E_N[A_i]\leq c_i(1+\varepsilon_i)$ for each $i\in N$. For any family of losses $\calL$, the left-hand side of~\eqref{eq:continuous-closure-expectation} with $\calE=\calE^\w$ is at most $1+\bar\varepsilon_N$. In particular, mechanisms selecting from $\P_\Delta(E)$ for every $\Delta>0$ satisfy $\risk_N(\calM,\nap)\leq1+\bar\varepsilon_N$, and those selecting from $\Q_\Delta(E)$, including $\Mcum$, $\Mall$, and $\Mbh$, satisfy $\risk_N(\calM,\nap)\leq C_N(1+\bar\varepsilon_N)/C$.
    \item In general, for every $\Delta>0$, every measurable selection $A\in\P_\Delta(E)$ satisfies $\E_N[\nap_N(A)]\leq\Delta(1+\bar\varepsilon_N)+\delta_N$. If $B<\infty$, every measurable selection $A\in\calX(\calE^\w)$ satisfies $\E_N[A_i]\leq c_i(1+\varepsilon_i)+\delta_iB$ for each $i\in N$.
\end{enumerate}
\end{proposition}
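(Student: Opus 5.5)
The plan is to repeat the proofs of Theorem~\ref{thm:continuous-closure} and Propositions~\ref{prop:compact-safe-allocation-polytope} and~\ref{prop:polytope-threshold}, changing only the final step where $\E_N[E_i]\leq 1$ is used. Every pathwise inequality in those proofs still holds, since none depended on validity. In particular, $A_i\leq c_iE_i$, and $\ell_N(A)/\Delta\leq E^\w_N$ for $A\in\calA^\ell_\Delta(\calE^\w)$. For $\Q_\Delta$, display~\eqref{eq:compact-polytope-risk-bound} gives $\nap_N(A)/\Delta\leq C^{-1}\sum_{i\in N}c_iE_i$. Only the expectation step needs replacing by the approximate bound~\eqref{eq:approximate-evalue}.

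\emph{Part 1 ($\delta_i=0$ on $N$).} Letting $t\to\infty$ in~\eqref{eq:approximate-evalue} by monotone convergence gives $\E_P[E_i]\leq 1+\varepsilon_i$ for $i\in N$. This holds under every law in $\mathfrak P_N$, because it is a marginal statement. Null deterrence then follows as $\E_N[A_i]\leq c_i\E_N[E_i]\leq c_i(1+\varepsilon_i)$. For the closure bound, the supremum over $\ell,\Delta,A$ is dominated pathwise by $E^\w_N=C_N^{-1}\sum_{i\in N}c_iE_i$. Its expectation is at most $C_N^{-1}\sum_{i\in N}c_i(1+\varepsilon_i)=1+\bar\varepsilon_N$, using linearity, which needs no independence. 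If $N=\emptyset$ the loss bound is trivial. The $\P_\Delta$ statement is the special case $\ell=\nap$, and the $\Q_\Delta$ statement follows by taking expectations in~\eqref{eq:compact-polytope-risk-bound}, which gives $C^{-1}\sum_{i\in N}c_i(1+\varepsilon_i)=C_N(1+\bar\varepsilon_N)/C$. Proposition~\ref{prop:mechanisms-in-polytope} places $\Mcum$, $\Mall$ and $\Mbh$ in $\Q_\Delta$, so they are covered.

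\emph{Part 2 (general $\delta$).} This is the main obstacle, because the raw expectation of $E_i$ may be infinite. The remedy is to exploit boundedness and truncate at the right level. Since $\nap\leq 1$, we have $\nap_N(A)\leq \min\{\Delta E^\w_N,1\}$. I would then use $\min\{\sum_i a_i,1\}\leq\sum_i\min\{a_i,w_i\}$, where $a_i=\Delta c_iE_i/C_N$ and $w_i=c_i/C_N$ for $i\in N$, so that $\sum_i w_i=1$. This subadditivity holds: if some $a_i>w_i$ the right side is at least $\sum_i w_i=1$, and otherwise the two sides agree with $\sum_i a_i$. Each term equals $(\Delta c_i/C_N)\,(E_i\wedge 1/\Delta)$. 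Applying~\eqref{eq:approximate-evalue} with $t=1/\Delta$ bounds its expectation by $(\Delta c_i/C_N)(1+\varepsilon_i+\delta_i/\Delta)=\Delta c_i(1+\varepsilon_i)/C_N+c_i\delta_i/C_N$. Summing over $N$ gives $\Delta(1+\bar\varepsilon_N)+\sum_{i\in N}c_i\delta_i/C_N\leq\Delta(1+\bar\varepsilon_N)+\delta_N$, since $c_i/C_N\leq 1$.

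For null deterrence with $B<\infty$, use $A_i\leq\min\{c_iE_i,B\}=c_i(E_i\wedge B/c_i)$. Applying~\eqref{eq:approximate-evalue} with $t=B/c_i$ gives $\E_N[A_i]\leq c_i(1+\varepsilon_i)+\delta_iB$. The only delicate points are the truncation identity above and checking that~\eqref{eq:approximate-evalue}, being a statement about marginals, applies under every law in $\mathfrak P_N$ without any dependence assumption.
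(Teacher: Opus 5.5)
Your Part~1 and the null-deterrence bound in Part~2 are correct and follow the paper's argument. The NAP bound in Part~2, however, rests on a false inequality. The claim $\min\{\sum_{i\in N}a_i,1\}\leq\sum_{i\in N}\min\{a_i,w_i\}$ with $\sum_{i\in N}w_i=1$ fails. When only some $a_i$ exceed $w_i$, the right side equals $\sum_{a_i>w_i}w_i+\sum_{a_i\leq w_i}a_i$, and this need not reach $1$. For example, $w=(1/2,1/2)$ and $a=(0.6,0)$ give $0.6$ on the left and $0.5$ on the right.

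The problem is not only the justification: the intermediate bound you derive, $\Delta(1+\bar\varepsilon_N)+\sum_{i\in N}c_i\delta_i/C_N$, is itself false. Take two null agents with $c_1=c_2=1$ and $E_i=M\,\mathbf{1}_{G_i}$, where $G_1,G_2$ are disjoint events of probability $\delta$ each and $M\geq 2/\Delta$. Each $E_i$ is $(0,\delta)$-approximate, since $\E[E_i\wedge t]\leq\delta t$. On $G_1$ the allocation $(M,0)$, rescaled to the budget if needed, lies in $\P_\Delta(E)$, and symmetrically on $G_2$. So the selection that allocates whenever possible has $\E_N[\nap_N(A)]=2\delta$. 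This exceeds your weighted bound $\Delta+\delta$ whenever $\delta>\Delta$. The relaxation to $\delta_N$ at the end of your argument happens to restore a true statement, but only because the target bound was right, not because the derivation was.

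The repair is to truncate each summand at $1$ rather than at $w_i$. Subadditivity of $x\mapsto 1\wedge x$ gives
\[
\nap_N(A)\leq 1\wedge\sum_{i\in N}\frac{\Delta c_iE_i}{C_N}\leq\sum_{i\in N}\frac{\Delta c_i}{C_N}\Bigl(E_i\wedge\frac{C_N}{\Delta c_i}\Bigr).
\]
Applying~\eqref{eq:approximate-evalue} at $t=C_N/(\Delta c_i)$ makes each null agent contribute $\Delta c_i(1+\varepsilon_i)/C_N+\delta_i$, and these sum to $\Delta(1+\bar\varepsilon_N)+\delta_N$. This is the paper's route. The example above shows that the unweighted sum $\delta_N$ in the statement is genuinely needed and cannot be sharpened to the cost-weighted average of the $\delta_i$.
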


\begin{proof}
Fix a joint law under which $N$ is the set of null agents, and write $\E$ for its expectation. For each $i\in N$, the marginal law of agent $i$ lies in $\calP$, so $\E[E_i\wedge t]\leq1+\varepsilon_i+\delta_it$ for every $t\geq0$. In part~1, letting $t\to\infty$ and applying monotone convergence gives $\E[E_i]\leq1+\varepsilon_i$. Each bound in part~1 then follows from a pathwise inequality established in the main text, in which the relevant quantity is dominated by a nonnegative linear combination of the null agents' e-values: $A_i\leq c_iE_i$ for null deterrence, the bound by $E_N^\w=C_N^{-1}\sum_{i\in N}c_iE_i$ in the proof of Theorem~\ref{thm:continuous-closure}, and~\eqref{eq:compact-polytope-risk-bound} for $\Q_\Delta(E)$. Taking expectations and using linearity gives the claims. No assumption on the dependence between agents is needed.

For part~2, let $A\in\P_\Delta(E)$. If $\|A\|_1=0$ then $\nap_N(A)=0$. Otherwise, the constraint~\eqref{eq:pointwise-nap-certificate} with $S=N$ gives $\nap_N(A)\leq\Delta C_N^{-1}\sum_{i\in N}c_iE_i$. Since also $\nap_N(A)\leq1$ and $x\mapsto1\wedge x$ is subadditive on $[0,\infty)$,
\[
    \nap_N(A)\leq1\wedge\sum_{i\in N}\frac{\Delta c_iE_i}{C_N}
    \leq\sum_{i\in N}\left(1\wedge\frac{\Delta c_iE_i}{C_N}\right)
    =\sum_{i\in N}\frac{\Delta c_i}{C_N}\left(E_i\wedge\frac{C_N}{\Delta c_i}\right).
\]
Applying~\eqref{eq:approximate-evalue} with $t=C_N/(\Delta c_i)$ to each term gives $\E[\nap_N(A)]\leq\sum_{i\in N}\left(\Delta c_i(1+\varepsilon_i)/C_N+\delta_i\right)=\Delta(1+\bar\varepsilon_N)+\delta_N$. For null deterrence, $A_i\leq c_iE_i$ and $A_i\leq \|A\|_1\leq B$ give $A_i\leq c_i(E_i\wedge B/c_i)$, and we apply~\eqref{eq:approximate-evalue} with $t=B/c_i$. Taking the supremum over joint laws completes the proof.
\end{proof}

Applied to e-values $E_{i,n}$ computed from $n$ observations, part~1 shows that if each agent's e-value is uniformly \emph{strongly} asymptotic, then $\limsup_n\E_N[A_{i,n}]\leq c_i$ for every $i\in N$ and the post-hoc risk is asymptotically at most one (at most $C_N/C$ for $\Q_\Delta(E)$). Part~2 shows that if they are merely uniformly asymptotic, then null deterrence still holds asymptotically when $B<\infty$, and $\limsup_n\E_N[\nap_N(A_n)]\leq\Delta$ for each fixed $\Delta$. But post-hoc risk control may not hold.

The restriction $\delta_i=0$ in part~1 cannot be dropped, as the following example shows.

\begin{example}
\label{ex:t-statistic-evalue}
Let $m=1$, and suppose the agent's null hypothesis is that their iid observations have mean zero and finite, positive variance. Consider
\[
    E_n=\exp\left(\lambda T_n-\frac{\lambda^2}{2}\right),
\]
where $T_n=\sqrt n\bar X_n/\hat\sigma_n$ is the t-statistic, $\hat\sigma_n$ is the sample standard deviation, and $\lambda>0$. Under every null, $T_n$ converges in distribution to $\mathcal N(0,1)$, so $(E_n)$ is an asymptotic e-value~\citep[Section~6.2]{ignatiadis2024asymptotic}. It is not strongly asymptotic, however. If the observations are Gaussian, then $T_n$ has a Student t distribution with $n-1$ degrees of freedom, whose moment generating function is infinite, so $\E_P[E_n]=\infty$ for every $n\geq2$. With a single agent, $\Mcum_\Delta$, $\Mall_\Delta$, and $\Mbh_\Delta$ each allocate if and only if $E_n\geq1/\Delta$, in which case the NAP equals one. Thus, if the agent is null, $\sup_{\Delta>0}\nap_N(\calM_\Delta)/\Delta=E_n$, and the post-hoc risk is infinite for every $n$ and every budget. If $B=\infty$, then $\Mall_\Delta$ also gives the null agent expected allocation $c_1\E_P[E_n\ind{E_n\geq1/\Delta}]=\infty$, so null deterrence fails as well.
\end{example}

The failure in Example~\ref{ex:t-statistic-evalue} is due to the particular construction rather than to the null hypothesis. If $\hat\sigma_n$ is replaced by the uncentered root mean square of the observations, then $(E_n)$ becomes strongly asymptotic~\citep{ignatiadis2024asymptotic,chugg2026posthoc}. The Bentkus-type asymptotic e-values of \citet{martinez2026bentkus}, which apply polynomial rather than exponential transformations to this self-normalized statistic, are also strongly asymptotic and avoid the conservativeness of the exponential construction. Since both are valid pointwise in $P$, the limiting guarantees of part~1 hold for them under each fixed joint law.

Truncation, as in Section~\ref{sec:nonnull-participation}, converts asymptotic e-values into strongly asymptotic ones. Indeed, if $E$ is $(\varepsilon,\delta)$-approximate, then $\E_P[E\wedge L]\leq1+\varepsilon+\delta L$, so $E\wedge L$ is $(\varepsilon+\delta L,0)$-approximate. Hence if $(E_{i,n})$ is uniformly asymptotic, then $(E_{i,n}\wedge L_n)$ is uniformly strongly asymptotic whenever $\delta_nL_n\to0$, and in particular for any fixed truncation level $L_n=L$. Part~1 of Proposition~\ref{prop:approximate-evalues} then applies to the mechanisms with truncated e-values, including post-hoc risk control.

\subsection{Truncation is necessary for growth-independent allocations}
\label{app:truncation-necessity}

Theorem~\ref{thm:capped-participation} shows that if a non-null agent's evidence reaches the truncation level, its limiting expected allocation is at least $Bc_i/C_S$, regardless of how quickly the other agents' evidence grows. Remark~\ref{rem:log-thresholding} shows that replacing $E_i$ by $\log(1+E_i)$ also prevents the vanishing share in Example~\ref{ex:participation-dilution}, but the resulting allocations depend on those rates. The following result shows that without such assumptions, truncation cannot be replaced by any unbounded transformation of the evidence.

\begin{proposition}
\label{prop:unbounded-transformation-dilution}
Fix $c_1,c_2>0$, $\Delta>0$, $0<B<\infty$, and a nondecreasing unbounded function $f:[0,\infty)\to[0,\infty)$ satisfying $f(x)\leq x$. There exist a null hypothesis, e-value sequences $(E_{1,n})$ and $(E_{2,n})$ valid under that null for every $n$, and a joint law $H$ under which both agents are non-null, such that $f(E_{i,n})\to\infty$ almost surely for $i=1,2$, but running either $\Mall$ or $\Mbh$ on $(f(E_{1,n}),f(E_{2,n}))$ gives $A_{1,n}\to0$ almost surely and
\[
\lim_{n\to\infty}\E_H[A_{1,n}]=0.
\]
\end{proposition}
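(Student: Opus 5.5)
We construct a deterministic example in which agent 2's transformed evidence outgrows agent 1's by an arbitrarily large factor. Once that holds, budget rescaling forces $A_{1,n}\to0$ under either mechanism.

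\textbf{Null and e-values.} Take a simple null under which each agent observes a $\{0,1\}$-valued variable with $\mathbb P(X=1)=p$. Any nonnegative function of $X$ with null mean at most one is a valid e-value. For a target level $x\geq1$, the variable $E=x\,\ind{X=1}$ is a valid e-value when $p=1/x$.

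Let agent $i$ at sample size $n$ observe $X_{i,n}$, and set $E_{i,n}=x_{i,n}\ind{X_{i,n}=1}$. Under the null, $\mathbb P(X_{i,n}=1)=1/x_{i,n}$; the e-value may depend on $n$ through the null probability, and it is valid for every $n$.

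Under $H$, both agents are non-null with $X_{i,n}=1$ almost surely. Then $E_{i,n}=x_{i,n}$ deterministically and $f(E_{i,n})=f(x_{i,n})$.

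\textbf{Choosing the sequences.} Since $f$ is nondecreasing and unbounded, pick $x_{1,n}\geq1$ with $f(x_{1,n})\to\infty$. Then pick $x_{2,n}$ with
\[
f(x_{2,n})\geq n\,f(x_{1,n})+n,
\]
which is possible by unboundedness. Both $f(E_{i,n})\to\infty$ surely.

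\textbf{Mechanism analysis.} Write $F_{i,n}=c_if(x_{i,n})$. Eventually $F_{2,n}\geq C/\Delta$, so agent 2 alone triggers $\Mall$. For $\Mbh$, agent 2 has rank $1$ and $1\cdot F_{(1)}\geq C/\Delta$, so it is selected. In either case the raw total satisfies $W_n\geq F_{2,n}\to\infty$, so eventually $W_n>B$ and the budget binds.

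Agent 1's raw allocation is at most $F_{1,n}$, whether or not it is selected. Hence
\[
A_{1,n}\leq \frac{B F_{1,n}}{F_{2,n}}\leq \frac{B c_1 f(x_{1,n})}{c_2\big(n f(x_{1,n})+n\big)}\leq \frac{Bc_1}{c_2 n}\to0.
\]
This holds surely, hence almost surely. Since $0\leq A_{1,n}\leq B$, bounded convergence gives $\E_H[A_{1,n}]\to0$.

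\textbf{Role of $f(x)\leq x$.} This hypothesis only guarantees that running the mechanisms on $f(E)$ preserves null deterrence, so the setup is legitimate. The construction itself does not use it.

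\textbf{Main obstacle.} The delicate point is ensuring the e-values are genuinely valid under a common null while taking arbitrarily large values under $H$. The $n$-dependent Bernoulli null handles this, but it may read as contrived.

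If a fixed null is preferred, use an alternative construction. Under a null $\mathrm{Unif}[0,1]$ for $U_{i,n}$, set $E=x\,\ind{U\leq 1/x}$. Under $H$, let $U_{i,n}=0$ almost surely, which is a non-null law. This keeps one fixed null hypothesis and is the version I would write up.
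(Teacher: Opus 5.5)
Your uniform-null version is correct and is essentially the paper's proof. It uses the same e-values $x\,\ind{U\leq 1/x}$ under a $\mathrm{Unif}[0,1]$ null, the same law $H$ putting all data at zero, and a second sequence with $f(x_{2,n})\gtrsim n f(x_{1,n})$, so that proportional rescaling forces $A_{1,n}\leq Bc_1/(c_2 n)$. Write up that version rather than the Bernoulli one: with success probabilities $1/x_{1,n}\neq 1/x_{2,n}$ the two agents have different nulls, and $x_{2,n}\ind{X=1}$ is not an e-value under agent 1's null, so that construction does not give the single null hypothesis the statement requires.
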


\begin{proof}
Since $f$ is unbounded, choose deterministic $a_n,b_n\geq1$ such that
\[
    f(a_n)\geq n,\text{ and }f(b_n)\geq n f(a_n).
\]
Let the null hypothesis be that an agent's observations are iid uniform on $[0,1]$, and define
\[
    E_{1,n}=a_n\ind{X_{1,1}\leq1/a_n},\text{ and }
    E_{2,n}=b_n\ind{X_{2,1}\leq1/b_n}.
\]
Each statistic uses only the first of the agent's $n$ observations and has expectation one under the null, so both sequences consist of valid e-values. Since $0\leq f(x)\leq x$, their transformed values are also valid e-values. Take $H$ to be the joint law under which every observation of both agents equals zero. Both agents are then non-null, and $E_{1,n}=a_n$ and $E_{2,n}=b_n$ almost surely for every $n$. In particular, both transformed sequences diverge.

Write $C=c_1+c_2$. For all sufficiently large $n$, both $c_1f(a_n)$ and $c_2f(b_n)$ exceed $C/\Delta$, so both agents are selected by either policy. Their total raw allocation also exceeds $B$. Proportional budget rescaling therefore gives, almost surely,
\[
    A_{1,n}=\frac{Bc_1f(a_n)}{c_1f(a_n)+c_2f(b_n)}
    \leq\frac{Bc_1}{c_1+nc_2}\longrightarrow0.
\]
The same deterministic bound gives $\E_H[A_{1,n}]\to0$.
\end{proof}

In particular, even when $B>c_1+c_2$, diverging transformed evidence alone cannot guarantee profitable entry. The proof uses $f$ only at each fixed $n$, so the conclusion also holds if $f$ is replaced by a sequence of transformations $(f_n)$, each satisfying the same conditions. The truncations $x\mapsto\min\{x,L_n\}$ of Theorem~\ref{thm:capped-participation} escape the result precisely because each is bounded, even though $L_n\to\infty$. Once both agents' e-values exceed $L_n$, their truncated evidence is identical, so the allocation no longer depends on how quickly either e-value grows. This is the sense in which truncation is necessary: any unbounded transformation can be defeated by a competitor whose evidence grows sufficiently quickly, whereas under truncation an agent's limiting allocation is guaranteed by its own power alone.

Proposition~\ref{prop:unbounded-transformation-dilution} applies to $f(x)=\log(1+x)$, but it does not contradict Remark~\ref{rem:log-thresholding}. The remark assumes that $n^{-1}\log E_{i,n}\to r_i\in(0,\infty)$ for every non-null agent $i$, i.e., that all non-null agents' evidence grows at finite exponential rates. The construction above violates this assumption. With $f(x)=\log(1+x)$, it requires $\log(1+E_{2,n})=f(b_n)\geq nf(a_n)\geq n^2$, so $n^{-1}\log E_{2,n}\to\infty$ and agent 2's evidence grows super-exponentially.

The two results are therefore complementary. The logarithm guarantees every non-null agent a nonvanishing share provided all non-null agents' evidence grows at finite, positive exponential rates, but the resulting shares depend on those rates. Proposition~\ref{prop:unbounded-transformation-dilution} shows that some restriction on competitors' growth is unavoidable for any unbounded transformation. Truncation requires no such restriction, and its limiting allocations do not depend on the rates at all.

\subsection{Selective entry and the ex-post wrapper}
\label{app:selective-entry}

Suppose agents may enter the contract after collecting their data. Returning to Example~\ref{ex:pharma}, a biotech startup may naturally prefer to submit evidence from an existing private trial rather than opt into a contract that requires a new experiment. Formally, replace step (2) of the statistical contract in Section~\ref{sec:contract} by
\begin{quote}
(2) Each agent $i$ with private distribution $P_i$ may generate data $X_i\sim P_i$ and, after observing it, decide whether to pay $c_i$ and provide the data to the principal.
\end{quote}
We call this the \emph{ex-post statistical contract}. It can be handled by treating nonparticipation as zero submitted evidence.

\begin{tcolorbox}[
    colback=pink,
    colframe=white,
    boxrule=0pt,
    sharp corners,
    left=6pt,
    right=6pt,
    top=6pt,
    bottom=6pt,
    breakable
]
\textbf{Ex-post wrapper}
\begin{enumerate}[nosep,topsep=0.1cm]
    \item Let $\bar m$ be the total number of agents who may enter. For each $i\in[\bar m]$, set $\tilde E_i=E_iI_i$, where $I_i$ indicates whether agent $i$ enters.
    \item Run continuous-closure, $\Mcum$, $\Mall$, or $\Mbh$ on all $\bar m$ agents using $\tilde E_1,\dots,\tilde E_{\bar m}$. For continuous-closure, use their cost-weighted e-collection and choose a measurable allocation from the resulting region.
\end{enumerate}
\end{tcolorbox}
The variables $\tilde E_i$ are observable: if agent $i$ does not enter, then $I_i=0$ and hence $\tilde E_i=0$. The wrapper requires the principal to know the entire eligible pool in advance.

\begin{theorem}
\label{thm:ex-post}
Consider running continuous-closure for NAP, $\Mcum$, $\Mall$, or $\Mbh$ with the ex-post wrapper. The resulting mechanism $\calM$ satisfies $\risk_N(\calM,\nap)\leq1$. For $\Mcum$, $\Mall$, and $\Mbh$, the sharper bound $\risk_N(\calM,\nap)\leq C_N/C_{[\bar m]}$ holds. More generally, continuous-closure for a loss satisfying $\ell_S(0)=0$ for every $S$ has a nonempty region and satisfies $\risk_N(\calM,\ell)\leq1$. These statements hold for any budget.
\end{theorem}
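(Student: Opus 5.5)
The key step is to show that the masked statistics $\tilde E_i=E_iI_i$ still form valid e-values for the null agents, even though entry is chosen after the data are seen. Once that holds, every earlier guarantee transfers pathwise. The point is simple: $0\leq \tilde E_i\leq E_i$ holds regardless of how $I_i$ depends on $X_i$ or on anything else, so $\E_N[\tilde E_i]\leq\E_N[E_i]\leq1$ for every $i\in N$. Here $N\subseteq[\bar m]$ is the fixed set of null agents in the eligible pool, and a null agent that declines to enter simply contributes $\tilde E_i=0$. Consequently the cost-weighted collection $\tilde E^{\w}_S=C_S^{-1}\sum_{i\in S}c_i\tilde E_i$ over subsets of $[\bar m]$ is an e-collection. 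No independence or structure on the entry decisions is needed, because the argument only uses domination.

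With this in hand, I would treat the wrapper as an ordinary batch run on $\bar m$ agents with evidence $\tilde E$.
\begin{enumerate}
\item For continuous-closure for NAP, any measurable selection lies in $\calA^{\nap}_\Delta(\tilde{\calE}^{\w})$. The pathwise inequality from the proof of Theorem~\ref{thm:continuous-closure} gives $\sup_\Delta\nap_N/\Delta\leq \tilde E^{\w}_N$, and taking expectations yields $\risk_N\leq1$.
\item For $\Mcum$, $\Mall$, and $\Mbh$, Proposition~\ref{prop:mechanisms-in-polytope} places the allocation in $\Q_\Delta(\tilde E)$ with total cost $C_{[\bar m]}$. Inequality~\eqref{eq:compact-polytope-risk-bound} then bounds $\sup_\Delta \nap_N/\Delta$ by $C_{[\bar m]}^{-1}\sum_{i\in N}c_i\tilde E_i$, whose expectation is at most $C_N/C_{[\bar m]}$.
\item For a general loss with $\ell_S(0)=0$, the region $\calA^\ell_\Delta(\tilde{\calE}^{\w})$ contains $A=0$. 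Indeed $0\leq c_i\tilde E_i$, $\|0\|_1\leq B$, and $\ell_S(0)=0\leq\Delta\tilde E^{\w}_S$, with $\tilde E^{\w}_\emptyset=1$. So the region is nonempty, and Theorem~\ref{thm:continuous-closure} gives $\risk_N\leq1$.
\end{enumerate}
Budget independence follows because each of these bounds is pathwise. Budget rescaling multiplies the allocation by a common factor, which preserves membership in $\Q_\Delta$ and leaves NAP unchanged.

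I expect the main obstacle to be conceptual rather than computational. One must be careful that the guarantee is stated relative to the full eligible pool $[\bar m]$ with the fixed null set $N$, rather than relative to the random set of entrants. If instead we ran the mechanism only on entrants, the selection $I_i$ would bias their e-values upward and break validity, as in the two-stage discussion. Running on $[\bar m]$ with zeros avoids this, since $\tilde E_i\leq E_i$ holds unconditionally. I would also check that the measurability of $\tilde E_i$ causes no trouble, and that the cost normalization uses $C_{[\bar m]}$, the total cost over all eligible agents including non-entrants, since that is the $C$ appearing in the mechanisms' thresholds.
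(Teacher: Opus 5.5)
Your proposal is correct and follows the paper's proof. The domination $0\le \tilde E_i=E_iI_i\le E_i$ makes the masked statistics valid e-values, and their cost-weighted averages over $[\bar m]$ a valid e-collection, however entry depends on the data; the wrapper is then a batch run on $\bar m$ agents, so Theorem~\ref{thm:continuous-closure} and Propositions~\ref{prop:compact-safe-allocation-polytope} and~\ref{prop:mechanisms-in-polytope} give the bounds, and your case-by-case check (including that $A=0$ lies in the general-loss region) just spells out what the paper cites in one line.
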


\begin{proof}
For each null agent $i\in N$, the variable $\tilde E_i=E_iI_i$ satisfies $\E_N[\tilde E_i]\leq\E_N[E_i]\leq1$, since $I_i\in\{0,1\}$ and $E_i$ is an e-value. Thus $\tilde E_1,\dots,\tilde E_{\bar m}$ are valid e-values, and their cost-weighted averages form a valid e-collection. The wrapper therefore reduces to running the corresponding batch mechanism on these e-values over $\bar m$ agents. For general-loss closure, the zero allocation is feasible because $\ell_S(0)=0$. The conclusions follow from Theorem~\ref{thm:continuous-closure} and Propositions~\ref{prop:compact-safe-allocation-polytope} and~\ref{prop:mechanisms-in-polytope}.
\end{proof}

The wrapper can be conservative. Let $C_{[\bar m]}=\sum_{i=1}^{\bar m} c_i$. Allocation under $\Mcum$ and $\Mall$ requires $\sum_{i\in[\bar m]}c_i\tilde E_i\geq C_{[\bar m]}/\Delta$, while $\Mbh$ requires $kF_{(k)}\geq C_{[\bar m]}/\Delta$. Both conditions become harder to satisfy when the eligible pool is much larger than the participating set. Moreover, although the wrapper preserves the post-hoc risk guarantee, it does not generally yield an ex-post contract that satisfies null deterrence, as discussed next.

\subsubsection{Limits of null deterrence under selective entry}
\label{app:selective-entry-ic}

Recall that without selective entry, null deterrence requires
\[
    \E_N[A_i]\leq c_i,
\]
for each null agent $i\in N$ (i.e., $P_i\in\calP$). Since the agent's decision to opt into the contract is independent of the data they generate after opting in, this is equivalent to $\E_N[I_i(A_i-c_i)]\leq0$, where $I_i\in\{0,1\}$ indicates whether the agent opted into the contract. Under selective entry, however, $\E[I_i(A_i-c_i)]\leq0$ is not equivalent to $\E[A_i]\leq c_i$, since $I_i$ is not independent of $A_i$. And while the latter continues to hold when using the ex-post wrapper (since $\tilde E_i\leq E_i$), the former does not.

In fact, the condition $\E[I_i(A_i-c_i)]\leq0$ cannot be met in general unless no allocation to agent $i$ exceeds $c_i$ with positive probability under $\calP$. Indeed, fix an agent $i\in N$ and consider the counterfactual allocation $A_i^\circ$, which is the allocation were agent $i$ to enter the contract, keeping the behavior of all other agents fixed. Now consider the natural entry rule $I_i=\ind{A_i^\circ>c_i}$. This rule can be calculated once the behavior of the other agents is known.

For any $P\in\calP$ such that $i\in N$, $\E_P[I_i(A_i-c_i)]=\E_P[I_i(A_i^\circ-c_i)]=\E_P[A_i^\circ-c_i\mid I_i=1]P(A_i^\circ>c_i)$. Thus, to ensure that $\E_P[I_i(A_i-c_i)]\leq0$, one needs $P(A_i^\circ>c_i)=0$ for all such $P$. That is, the allocation to each null agent $i$ must be at most $c_i$ almost surely. This is an extremely restrictive requirement. For our e-value-based allocations $A_i=c_iE_i$, for instance, it says that the e-value must never exceed one, thereby allowing only trivial choices.

Note, however, that this argument requires that the agent be able to compute $A_i^\circ$ before entering. That is, the behavior of the other agents is known to them. Thus, while the above argument shows that $\E[I_i(A_i-c_i)]\leq0$ is impossible in the ex-post setting in general, it does not preclude the possibility that this condition might be met in more restrictive settings. Indeed, it would be quite natural to assume that the agents do not know each other's actions, and thus could only estimate $A_i^\circ$. In such a scenario, an appropriate version of null deterrence may exist.

\section{Examples of Loss Functions}
\label{app:loss-examples}

The NAP is not the only loss one could study. Table~\ref{tab:loss-functions}
lists several alternatives.
Continuous-closure applies to every loss for which its region is nonempty. For losses that vanish at the zero allocation, this is automatic. Otherwise nonemptiness must be checked. With a finite budget, feasibility is evaluated on the final allocation.

\begin{table}[ht!]
\centering
\small
\caption{Examples of loss functions the principal may consider.}
\label{tab:loss-functions}
\begin{tabular}{p{0.2\linewidth}p{0.38\linewidth}p{0.3\linewidth}}
\toprule
Loss & Definition & Interpretation \\
\midrule
Null allocation proportion
&
$\nap_N(A)=\sum_{i\in N}A_i/\sum_{j\in[m]}A_j$
&
Share of total allocation going to null agents. \\
\midrule
Total null allocation
&
$\ell_N(A)=\sum_{i\in N}A_i$
&
Absolute amount allocated to null agents. \\
\midrule
Budget-normalized null allocation
&
$\ell_N(A)=B^{-1}\sum_{i\in N}A_i$
&
Fraction of a fixed budget allocated to null agents. \\
\midrule
Weighted null allocation proportion
&
$\ell_N(A)=\sum_{i\in N}w_iA_i/\sum_{j\in[m]}w_jA_j$
&
Weighted share of allocation going to null agents. \\
\midrule
Weighted null allocation
&
$\ell_N(A)=\sum_{i\in N}w_iA_i$
&
Weighted absolute allocation to null agents. \\
\midrule
Maximum null allocation
&
$\ell_N(A)=\max_{i\in N}A_i$
&
Largest allocation made to any null agent. \\
\midrule
Excess null allocation
&
$\ell_N(A)=\left(\sum_{i\in N}A_i-\tau\right)_+$
&
Amount by which null allocation exceeds a tolerance. \\
\midrule
False discovery proportion
&
$\ell_N(A)=|N\cap R(A)|/|R(A)|$, where $R(A)=\{i:A_i>0\}$
&
Unweighted fraction of selected agents that are null. \\
\midrule
False discovery count
&
$\ell_N(A)=|N\cap R(A)|$
&
Number of null agents receiving positive allocation. \\
\midrule
Missed non-null allocation
&
$\ell_N(A)=\sum_{i\notin N}(A_i^\star-A_i)_+$
&
Shortfall relative to desired non-null allocations. \\
\midrule
False non-allocation proportion
&
$\ell_N(A)=|\{i\notin N:A_i=0\}|/|N^c|$
&
Fraction of non-null agents receiving no allocation. \\
\bottomrule
\end{tabular}
\end{table}

\section{Experimental Details}
\label{app:experiments}

We generate independent instances as follows. Each of the \(m\) agents is null with probability \(1/2\). Agent \(i\) supplies \(n=30\) independent observations from \(\mathcal N(0,\sigma^2)\) under the null or \(\mathcal N(1,\sigma^2)\) under the alternative, with \(\sigma^2=4\). The corresponding normal likelihood-ratio e-value is
\[
E_i=\exp\!\left(\frac{1}{\sigma^2}\sum_{t=1}^{n}X_{i,t}
-\frac{n}{2\sigma^2}\right).
\]
Heterogeneous costs and observable weights are drawn independently according to
\[
\log c_i\sim \mathcal N(-0.55^2/2,0.55^2),\text{ and }\log v_i\sim \mathcal N(-\sigma_v^2/2,\sigma_v^2),
\]
so both distributions have mean one. Unless stated otherwise, \(\sigma_v=0.8\). For Figure~\ref{fig:optimality-gap-runtime}, we set \(m=30\), \(B=30\), and \(\Delta=0.1\), and retain 150 independently generated instances on which the full safe polytope admits a nonzero allocation. Conditioning on activation isolates allocation quality from the separate question of whether the evidence suffices to allocate at all. We run this comparison at \(\sigma_v\in\{0,0.2,0.4,\dots,1.2\}\) on the same 150 instances. Writing \(\log v_i=\sigma_vZ_i-\sigma_v^2/2\) for a standard normal \(Z_i\) drawn once per agent, only the dispersion of the weights changes across values of \(\sigma_v\). We also record the realized null allocation proportion of each allocation.

For the linear comparison, we maximize \(V_{\rm lin}(A)=\sum_i v_i A_i\). For the concave comparison, we directly maximize
\(V_{\log}(A)=\sum_i v_i\log(1+A_i)\). Optimization over \(\P_\Delta(E)\) uses constraint generation initialized with the budget constraint. At each iteration, we solve the current optimization problem and use the minimum-cut oracle in Proposition~\ref{prop:polytope-separation} to add a maximally violated subset constraint, stopping when no violation exceeds the numerical tolerance. Optimization over \(\Q_\Delta(E)\) uses all its constraints in a single optimization problem. The linear objective gives a linear program, while the logarithmic objective gives a smooth convex optimization problem. The cumulative-evidence, AoN, and e-BH allocations use hard-cap proportional rescaling. All methods are evaluated using the same objective. We report the relative gap
\[
\frac{V_{\P}^{\star}-V(A)}
     {\max\{V_{\P}^{\star},10^{-12}\}},
\]
where \(V_{\P}^{\star}\) is the objective value from full-polytope optimization for the same realized instance.

For Figure~\ref{fig:optimality-gap-scaling}, we repeat both utility comparisons at \(m\in\{10,20,40,80,100\}\), with 150 independently generated active instances at each size (750 in total). All other parameters, including \(B=30\) and \(\Delta=0.1\), are held fixed. The four methods are compared with full-polytope optimization on the same instances.

The Python implementation uses SciPy's HiGHS interface for linear programs, its SLSQP method with analytic gradients for logarithmic utility, and the NetworkX minimum cut. For each subproblem, we check feasibility and bound the objective gap by maximizing \(\nabla V_{\log}(A)^\top(A'-A)\) over its feasible region, requiring this bound to be at most \(10^{-6}\max\{1,V_{\log}(A)\}\).

\end{document}